\newcommand{\cC}{{\mathcal C}}
\newcommand{\cE}{{\mathcal E}}
\newcommand{\cF}{{\mathcal F}}
\newcommand{\cL}{{\mathcal L}}
\newcommand{\cN}{{\mathcal N}}
\newcommand{\cP}{{\mathcal P}}
\newcommand{\cS}{{\mathcal S}}
\newcommand{\cT}{{\mathcal T}} 
\newcommand{\cU}{{\mathcal U}} 
\newcommand{\ba}{\backslash}
\newtheorem{theorem}{Theorem}[section]
\newtheorem{lemma}[theorem]{Lemma}
\newtheorem{corollary}[theorem]{Corollary}
\newtheorem{observation}[theorem]{Observation}
\newproof{proof}{Proof}
\journal{Advances in Applied Mathematics}
\begin{document}

\begin{frontmatter}



\title{Attaching leaves and picking cherries to characterise the hybridisation number for a set of phylogenies}


\author[label1]{Simone Linz}
\ead{s.linz@auckland.ac.nz}

\author[label2]{Charles Semple}
\ead{charles.semple@canterbury.ac.nz}

\address[label1]{Department of Computer Science, University of Auckland, New Zealand}
\address[label2]{School of Mathematics and Statistics, University of Canterbury, New Zealand}

\begin{abstract}
Throughout the last decade, we have seen much progress towards characterising and computing the minimum hybridisation number for a set $\cP$ of rooted phylogenetic trees. Roughly speaking, this minimum quantifies the number of hybridisation events needed to  explain a set of phylogenetic trees by simultaneously embedding them into a phylogenetic network. From a mathematical viewpoint, the notion of agreement forests is the underpinning concept for almost all results that are related to calculating the minimum hybridisation number for when $|\cP|=2$. However, despite various attempts, characterising this number in terms of agreement forests for $|\cP|>2$ remains elusive. In this paper, we characterise the minimum hybridisation number for when $\cP$ is of arbitrary size and consists of not necessarily binary trees. Building on our previous work on cherry-picking sequences, we first establish a new characterisation to compute the minimum hybridisation number in the space of tree-child networks. Subsequently, we show how this characterisation extends to the space of all rooted phylogenetic networks. Moreover, we establish a particular hardness result that gives new insight into some of the limitations of agreement forests.
\end{abstract}

\begin{keyword}
agreement forest\sep cherry-picking sequence\sep minimum hybridisation\sep phylogenetic networks\sep reticulation\sep tree-child networks

\end{keyword}

\end{frontmatter}


\section{Introduction}

In our quest for faithfully describing evolutionary histories, we are currently witnessing a shift from the representation of ancestral histories by phylogenetic (evolutionary) trees towards phylogenetic networks. The latter not only represent speciation events but also non-tree like events such as hybridisation and horizontal gene transfer that have played an important role throughout the evolution of certain groups of organisms as for example in plants and fish~\cite{drezen16,mallet16,marcussen14,soucy15}. 

In this paper, we focus on a problem that is related to the reconstruction of phylogenetic networks. Called {\sc Minimum Hybridisation} and formally stated at the end of this section, this problem was first introduced by Baroni et al.~\cite{baroni05}. While {\sc Minimum Hybridisation} was historically motivated by attempting to quantify hybridisation events, it is now more broadly regarded as a tool to quantify all non-tree like events to which we collectively refer to  as reticulation events.  Pictorially speaking, {\sc Minimum Hybridisation} aims at the reconstruction of a phylogenetic network that simultaneously embeds a given set of phylogenetic trees while minimising the number of reticulation events that are represented by vertices in the network whose in-degree is at least two. More formally, the problem is based on the following underlying question. Given a collection $\cP$ of rooted phylogenetic trees on the same set of taxa that have correctly been reconstructed for different parts of the species' genomes, what is the smallest number of reticulation events that is needed to explain $\cP$?~Over the last ten years, we have seen significant progress in characterising and computing this minimum number for when $|\cP|=2$ (e.g. see~\cite{albrecht12,bordewich07,bordewich07b,chen13,kelk12,wu10}). However, except for some heuristic approaches~\cite{chen12,wu10b}, less is known for when $|\cP|\geq 3$. This is due to the fact that the notion of agreement forests, which underlies almost all results that are related to {\sc Minimum Hybridisation}, appears to be ungeneralisable to more than two trees.  

Previously, together with Humphries, we introduced cherry-picking sequences and characterised a restricted version of {\sc Minimum Hybridisation} for $\cP$ being binary and of arbitrary size~\cite{humphries13}. Instead of minimising the number of reticulation events needed to explain $\cP$ over the space of all rooted phylogenetic networks, this restricted version only considers binary temporal tree-child networks. Such networks are the binary intersection of the classes of temporal networks and tree-child networks introduced by Moret et al.~\cite{moret04} and Cardona et al.~\cite{cardona12}, respectively.
Disadvantageously, this restriction is so strong that not even if $|\cP|=2$ are we guaranteed to have a solution, i.e. there may be no such network explaining $\cP$~\cite[Figure 2]{humphries13a}.

Here, we advance our work on cherry-picking sequences and establish two new characterisations to quantify the amount of reticulation events that are needed to explain a set of (not necessarily binary) phylogenetic trees.~The first characterisation solves the problem over the space of tree-child networks. Unlike temporal networks, we show that every collection $\cP$ of rooted phylogenetic trees  has a solution, i.e. the trees in $\cP$ can simultaneously be embedded into a tree-child network. Subsequently, we extend this characterisation to the space of all rooted phylogenetic networks and, hence, provide the first characterisation for {\sc Minimum Hybridisation} in its most general form. Both characterisations are based on computing a cherry-picking sequence for $\cP$, while the latter characterisation makes also use of an operation that attaches auxiliary leaves to the trees in $\cP$.

In addition to the two new characterisations, we return back to agreement forests and investigate why they seem to be of limited use to solve {\sc Minimum Hybridisation} for an arbitrary size set $\cP$ of rooted phylogenetic trees. Roughly speaking, given $\cP$, one can compute a particular type of agreement forest $\cF$ of smallest size and, if $|\cP|=2$, then each but one component in $\cF$ contributes exactly one to the minimum number of reticulation events that is needed to explain $\cP$. On the other hand, if $|\cP|>2$, the contribution of each component in $\cF$ to this minimum number is much less clear. Motivated by this drawback of agreement forests, we consider a set $\cP$ of rooted binary phylogenetic trees as well as the agreement forest $\cF$ {\it induced} (formally defined in Section~\ref{sec:scoring}) by a phylogenetic network that explains $\cP$ and minimises the number of reticulations events and ask whether or not, it is computationally hard to calculate the minimum number of reticulation events that is needed to explain $\cP$. We call the associated decision problem {\sc Scoring Optimum Forest}. This problem was first mentioned in~\cite{iersel16}, where the authors conjecture that {\sc Scoring  Optimum Forest} is NP-complete. Using the machinery of cherry-picking sequences, we show that {\sc Scoring  Optimum Forest} is NP-complete for when one considers the smaller space of tree-child networks.

The paper is organised as follows. The remainder of the introduction contains some definitions and preliminaries on phylogenetic networks. In Section~\ref{sec:main}, we state the two new characterisations in terms of cherry-picking sequences. The first optimises {\sc Minimum Hybridisation} within the space of tree-child networks and the second optimises {\sc Minimum Hybridisation} within the space of all phylogenetic networks. The second characterisation is an extension of the first by additionally allowing the attachment of auxiliary leaves. We then establish proofs for both characterisations in Section~\ref{pickingproofs} as well as a formal description of the analogous algorithm. In Section~\ref{sec:bound}, we  establish an upper bound on the number of auxiliary leaves that, given a collection of phylogenetic trees, are needed to characterise {\sc Minimum Hybridisation} over the space of all rooted phylogenetic networks. Lastly, in Section~\ref{sec:scoring}, we formally state the problem {\sc Scoring  Optimum Forest} and show that it is NP-complete. We finish the paper with some concluding remarks in Section~\ref{sec:conclu}.

Throughout the paper, $X$ denotes a non-empty finite set. A {\it phylogenetic network $\cN$ on $X$} is a rooted acyclic digraph with no parallel edges that satisfies the following properties:
\begin{enumerate}[(i)]
\item the (unique) root has out-degree two,
\item the set $X$ is the set of vertices of out-degree zero, each of which has in-degree one, and
\item all other vertices either have in-degree one and out-degree two, or in-degree at least two and out-degree one.
\end{enumerate}
For technical reasons, if $|X|=1$, we additionally allow $\cN$ to consist of the single vertex in $X$. The set $X$ is the {\it leaf set} of $\cN$ and the vertices in $X$ are called {\it leaves}. We sometimes denote the leaf set of $\cN$ by $\cL(\cN)$. For two vertices $u$ and $v$ in $\cN$, we say that $u$ is a {\it parent} of $v$ and $v$ is a {\it child} of $u$ if $(u,v)$ is an edge in $\cN$. Furthermore, the vertices of in-degree at most one and out-degree two are {\it tree vertices}, while the vertices of in-degree at least two and out-degree one are {\it reticulations}. An edge directed into a reticulation is called a {\it reticulation edge} while each non-reticulation edge is called a {\it tree edge}. We say that $\cN$ is {\it binary} if each reticulation has in-degree exactly two. Lastly, a directed path $P$ in $\cN$ ending at a leaf is a {\it tree path} if every intermediate vertex in $P$ is a tree vertex.

A phylogenetic network $\cN$ on $X$ is {\it tree-child} if each non-leaf vertex in $\cN$ is the parent of at least one tree vertex or leaf. An example of two tree-child networks $\cN$ and $\cN'$ is given at the bottom of Figure~\ref{fig:cps}. Note that the phylogenetic network obtained from $\cN$ by deleting the leaf labelled 4 and suppressing the resulting degree-two vertex $v$ results in a network that is not tree-child.

A {\it rooted  phylogenetic $X$-tree $\cT$} is a rooted tree with no degree-two vertices except possibly the root which has degree at least two, and with leaf set $X$. If $|X|=1$, then $\cT$ consists of the single vertex in $X$. As for phylogenetic networks, the set $X$ is called the {\it leaf set} of $\cT$ and is denoted by $\cL(\cT)$. In addition, $\cT$ is {\it binary} if $|X|=1$ or, apart from the root which has degree two, all interior vertices have degree three.  
Since we are only interested in {\it rooted} phylogenetic trees and {\it rooted} binary phylogenetic trees in this paper, we will refer to such trees simply as phylogenetic trees and binary phylogenetic trees, respectively. For a  phylogenetic $X$-tree $\cT$, we consider two types of subtrees. Let $X'$ be a subset of $X$. The {\it minimal  subtree} of $\cT$ that connects all the leaves in $X'$ is denoted by $\cT(X')$. Moreover, the {\it restriction of $\cT$ to $X'$}, denoted by $\cT|X'$, is the phylogenetic $X'$-tree obtained from $\cT(X')$ by suppressing all degree-two vertices apart from the root. Lastly, for two phylogenetic $X$-trees $\cT$ and $\cT'$, we say that $\cT'$ is a {\it refinement} of $\cT$ if $\cT$ can be obtained from $\cT'$ by contracting a possibly empty set of internal edges in $\cT'$. In addition, $\cT'$ is a {\it binary refinement} of $\cT$ if $\cT'$ is binary.

Let $\cT$ be a phylogenetic $X'$-tree. A phylogenetic network $\cN$ on $X$ with $X'\subseteq X$ {\it displays} $\cT$ if, up to suppressing vertices with in-degree one and out-degree one, there exists a binary refinement of $\cT$ that can be obtained from $\cN$ by deleting edges, leaves not in $X'$, and any resulting vertices of out-degree zero, in which case we call the resulting acyclic digraph an {\it embedding} of $\cT$ in $\cN$. If $\cP$ is a collection of phylogenetic $X$-trees, then $\cN$ {\it displays} $\cP$ if each tree in $\cP$ is displayed by $\cN$. For example, the two phylogenetic networks at the bottom of Figure~\ref{fig:cps} both display each of the four trees shown in the top part of the same figure.

\begin{figure}[t]
\center
\scalebox{1.2}{\input{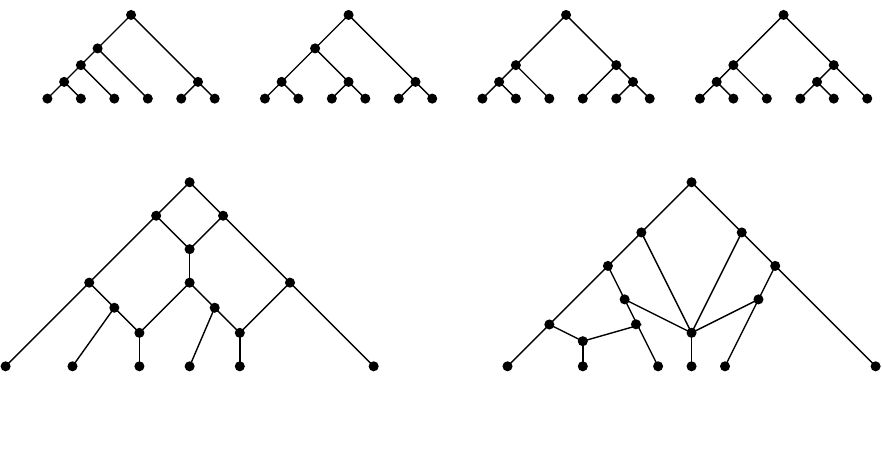_t}}
\caption{Top: A set $\cP$ of four  phylogenetic $X$-trees with $X=\{1,2,\ldots,6\}$. Bottom: Two tree-child networks displaying $\cP$ with $h(\cN)=3$ and $h(\cN')=4$.}
\label{fig:cps}
\end{figure}

Let $\cN$ be a phylogenetic network with vertex set $V$ and root $\rho$. The {\it hybridisation number} of $\cN$, denoted $h(\cN)$, is the value
$$h(\cN)=\sum_{v\in V-\{\rho\}} \left(d^-(v)-1\right),$$
where $d^-(v)$ denotes the in-degree of $v$. For example, the phylogenetic networks $\cN$ and $\cN'$ that are shown in Figure~\ref{fig:cps} have hybridisation number 3 and 4, respectively. Observe that each tree vertex and each leaf contributes zero to this sum, but each reticulation $v$ contributes $d^-(v)-1$. Furthermore, for a set $\cP$ of phylogenetic $X$-trees, we denote by $h_{\rm tc}(\cP)$ and $h(\cP)$, respectively, the values
$$\min\{h(\cN):~\mbox{$\cN$ is a tree-child network on $X$ that displays $\cP$}\}$$
and
$$\min\{h(\cN):~\mbox{$\cN$ is a phylogenetic network on $X$ that displays $\cP$}\}.$$

\noindent {\bf Remark.} While the above definition of a phylogenetic network is restricted to networks whose tree vertices have out-degree exactly two, we note that the results in this paper also hold for networks with tree vertices whose out-degree is at least two. More particularly, if a set $\cP$ of phylogenetic $X$-trees is displayed by a phylogenetic network $\cN$ whose tree vertices have out-degree at least two, then, by ``refining'' such vertices, we can obtain a phylogenetic network $\cN'$ whose tree vertices have out-degree exactly two, displays $\cP$, and $h(\cN')=h(\cN)$. Thus no generality is lost with this restriction.\\

We next formally state the two decision problems that this paper is centred around.\\

\noindent{\sc Minimum Tree-Child Hybridisation} \\
{\bf Instance.} A set $\cP$ of phylogenetic $X$-trees and a positive integer $k$. \\
{\bf Question.} Does there exist a tree-child network $\cN$ on $X$ that displays $\cP$ such that $h(\cN)\leq k$?\\

\noindent{\sc Minimum Hybridisation} \\
{\bf Instance.} A set $\cP$ of phylogenetic $X$-trees and a positive integer $k$. \\
{\bf Question.} Does there exist a phylogenetic network $\cN$ on $X$ that displays $\cP$ such that $h(\cN)\leq k$?\\

We will see at the end of this section that, for any given set $\cP$ of phylogenetic $X$-trees,  \noindent{\sc Minimum Tree-Child Hybridisation} has a solution, i.e. there exists a tree-child network that displays $\cP$.

It was shown in~\cite{bordewich07} that {\sc Minimum Hybridisation} is NP-hard, even for when $\cP$ consists of two rooted binary phylogenetic $X$-trees. To see that {\sc Minimum Tree-Child Hybridisation} is also computationally hard, we again consider this restricted version of the problem and recall the following observation that was first mentioned in~\cite{humphries13} and can be derived by slightly modifying the proof of~\cite[Theorem 2]{baroni05}.

\begin{observation}\label{ob:tc}
Let $\cP=\{\cT,\cT'\}$ be a collection of two binary phylogenetic $X$-trees. If there exists a phylogenetic network $\cN$ that displays $\cP$ with $h(\cN)=k$, then there also exists a tree-child network $\cN'$ that displays $\cP$ with $h(\cN')\leq k$.
\end{observation}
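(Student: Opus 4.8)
The plan is to route the argument through the agreement-forest characterisation of Baroni et al.~\cite{baroni05} and to track the tree-child property through the network construction contained in that proof. Recall that Theorem~2 of~\cite{baroni05} expresses $h(\cP)$ for a pair $\cP=\{\cT,\cT'\}$ of binary phylogenetic $X$-trees as $m_a(\cP)-1$, where $m_a(\cP)$ is the minimum number of components taken over all \emph{acyclic agreement forests} of $\cP$; moreover, one direction of that proof is constructive, turning an acyclic agreement forest $\cF$ with $|\cF|$ components into a phylogenetic network displaying $\cP$ whose hybridisation number is exactly $|\cF|-1$. Since we are given a network $\cN$ displaying $\cP$ with $h(\cN)=k$, we have $m_a(\cP)-1=h(\cP)\le h(\cN)=k$. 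The first step is therefore to fix a minimum acyclic agreement forest $\cF$ of $\cP$, so that $|\cF|=m_a(\cP)\le k+1$.

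The second step is to apply the construction from~\cite{baroni05} to $\cF$ to obtain a network $\cN'$ displaying $\cP$ with $h(\cN')=|\cF|-1\le k$. In this construction the components of $\cF$ are processed in an order compatible with the acyclicity condition: the component $C_0$ containing the root forms the top of $\cN'$, and each remaining component $C_i$ is reattached by creating a reticulation $r_i$ whose unique out-edge leads to the root of the subtree spanning $C_i$, and whose two in-edges originate from the attachment points of $C_i$ in $\cT$ and in $\cT'$. Each attachment point is produced by subdividing an edge of a component, which introduces a new tree vertex. The acyclicity of $\cF$ guarantees that $\cN'$ is a valid (acyclic) phylogenetic network, and counting reticulations gives $h(\cN')=|\cF|-1$.

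The third and main step, which is where the ``slight modification'' of the proof of~\cite[Theorem~2]{baroni05} enters, is to verify that $\cN'$ is tree-child, that is, that every non-leaf vertex has a child that is a tree vertex or a leaf. I would argue this by case analysis on the vertex types produced by the construction: each reticulation $r_i$ has as its unique child the root of the subtree spanning $C_i$, which is a tree vertex (or a leaf, when $C_i$ is a singleton); each subdivision vertex $s$ has two children, one being a reticulation and the other the continuation of the subdivided edge, which leads to a tree vertex or leaf of the underlying component, possibly after passing through further subdivision vertices, each of which is itself a tree vertex; and each original interior vertex of a component retains two children, and subdividing the edges below it merely replaces those children by tree vertices, so at least one child is a tree vertex or leaf. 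The root of $\cN'$ inherits the last case.

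The main obstacle lies precisely in this third step, in controlling the situation where several components are reattached along the same edge of $\cT$ or of $\cT'$: one must order the corresponding subdivisions so that no vertex ends up with both children being reticulations. I expect to handle this by performing the subdivisions consecutively along the edge, so that each newly created tree vertex $s_1$ has the next subdivision vertex $s_2$ as its non-reticulation child, with the lowest subdivision vertex retaining the original tree-vertex or leaf endpoint. With the subdivisions arranged in this way, every non-leaf vertex of $\cN'$ has a tree-vertex or leaf child, so $\cN'$ is a tree-child network displaying $\cP$ with $h(\cN')\le k$, as required.
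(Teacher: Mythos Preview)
Your approach is precisely what the paper indicates: the observation is stated without proof, with the remark that it ``can be derived by slightly modifying the proof of~\cite[Theorem~2]{baroni05}'', and your proposal does exactly that by running the constructive direction of Baroni et al.'s argument on a minimum acyclic agreement forest and then verifying the tree-child property by case analysis on the vertex types created. The obstacle you single out---multiple attachments along a common edge---is the right one, and your resolution via consecutive subdivisions is sound; the proposal is correct and aligned with the paper's intended route.
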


\noindent The next theorem, whose straightforward proof is omitted, follows from Observation~\ref{ob:tc} and the fact that, given a tree-child network $\cN$ and a binary phylogenetic tree $\cT$, it can be checked in polynomial time whether or not $\cN$ displays $\cT$~\cite{iersel10,simpson}. 

\begin{theorem}\label{t:hardness}
The decision problem {\sc Minimum-Tree-Child Hybridisation} is {\rm NP}-complete.
\end{theorem}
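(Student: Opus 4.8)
The plan is to establish NP-completeness by combining membership in NP with a hardness reduction, where both pieces are largely assembled from the results already collected in the excerpt. For membership, I would observe that a tree-child network $\cN$ serves as a polynomial-size certificate: by property (iii) of the network definition together with the tree-child condition, the number of reticulations, and hence $h(\cN)$, is bounded polynomially in $|X|$ and $|\cP|$, so the witness has polynomial size. Verifying the certificate amounts to checking, for each tree $\cT\in\cP$, that $\cN$ displays $\cT$ and that $h(\cN)\leq k$; the latter is a direct degree computation, and the former is exactly the polynomial-time display check for tree-child networks cited from~\cite{iersel10,simpson}. I would need to note the technicality that the trees in $\cP$ may be non-binary, but since displaying $\cT$ is defined via a binary refinement and the display test for the cited references applies to binary phylogenetic trees, I would either invoke the refinement directly or restrict attention to binary inputs for the hardness direction (which suffices).

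For the hardness direction I would reduce from {\sc Minimum Hybridisation} restricted to two binary phylogenetic $X$-trees, which is NP-hard by the result of~\cite{bordewich07} quoted immediately before the theorem. The reduction is the identity map: given an instance $(\{\cT,\cT'\},k)$ of {\sc Minimum Hybridisation}, I would pass the very same pair and the same bound $k$ as an instance of {\sc Minimum Tree-Child Hybridisation}. The correctness of this trivial reduction rests entirely on Observation~\ref{ob:tc}. In the forward direction, any tree-child network is in particular a phylogenetic network, so a tree-child solution with $h(\cN)\leq k$ is immediately a solution to {\sc Minimum Hybridisation}. In the reverse direction, if some phylogenetic network $\cN$ displays $\{\cT,\cT'\}$ with $h(\cN)=k'\leq k$, then Observation~\ref{ob:tc} produces a tree-child network $\cN'$ displaying the same pair with $h(\cN')\leq k'\leq k$, yielding a solution to the tree-child problem. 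Thus the two problems have identical yes-instances on inputs of size two.

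I do not expect a genuine obstacle here, which is consistent with the excerpt describing the proof as ``straightforward'' and omitted; the content has been front-loaded into Observation~\ref{ob:tc} and the cited hardness and display-testing results. If anything, the one point requiring a moment of care is the membership argument: I must confirm that the bound on $h(\cN)$ makes the certificate polynomially bounded and that the display test runs in polynomial time per tree, so that the total verification over all of $\cP$ stays polynomial. Both follow immediately from the stated facts, so assembling them into a clean NP-completeness proof is routine.
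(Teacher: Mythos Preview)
Your proposal is correct and follows essentially the same route the paper intends: hardness via the identity reduction from {\sc Minimum Hybridisation} on two binary trees using Observation~\ref{ob:tc}, and membership in NP via the polynomial-time display test for tree-child networks cited from~\cite{iersel10,simpson}. This is precisely the ``straightforward'' argument the paper alludes to and omits.
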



We end this section by showing that every collection of phylogenetic $X$-trees can be displayed by a tree-child network on $X$. For $n=2$, let $\cU_2$ be the unique binary phylogenetic tree on two leaves, $x_1$ and $x_2$ say. Now, for a positive integer $n > 2$, obtain $\cU_n$ from $\cU_{n-1}$ as follows. Viewing the root $\rho$ of $\cU_{n-1}$ as a vertex of in-degree zero and out-degree one adjoined to the original root, add an edge that joins a new  vertex $v$ and a new leaf $x_n$ and, for each tree edge $e$ in $\cU_{n-1}$, subdivide $e$ with a vertex $u_e$, and add the edge $(u_e, v)$. The resulting phylogenetic network without viewing the root as a vertex of in-degree zero and out-degree one is $\cU_n$.

\begin{theorem}\label{t:all}
Let \mbox{} $\cU_n$ be the universal network on $X=\{x_1,x_2,\ldots,x_n\}$ with $n\geq 2$. Then $\cU_n$ is tree-child and displays all binary phylogenetic $X$-trees.
\end{theorem}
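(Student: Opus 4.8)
The plan is to prove two things about $\cU_n$: first, that it is tree-child, and second, that it displays every binary phylogenetic $X$-tree. Both facts are naturally approached by induction on $n$, since $\cU_n$ is built recursively from $\cU_{n-1}$. Before starting the induction, I would first carefully analyse the local structure of $\cU_n$, paying attention to what the construction does to each tree edge $e$ of $\cU_{n-1}$: it subdivides $e$ by a new tree vertex $u_e$ and attaches a reticulation edge $(u_e, v)$, where $v$ is the single new reticulation whose unique child is the new leaf $x_n$. I would isolate as a preliminary observation the fact that every leaf $x_1, \dots, x_{n-1}$ of $\cU_{n-1}$ still lies at the end of a tree path in $\cU_n$, because the subdivision vertices $u_e$ are tree vertices (in-degree one, out-degree two) and the new reticulation edges are directed \emph{into} $v$, not out of the $u_e$ towards the old leaves.

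\smallskip

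For the tree-child property, I would check condition-by-condition that each non-leaf vertex of $\cU_n$ is the parent of at least one tree vertex or leaf. There are three kinds of non-leaf vertices to inspect: the vertices inherited from $\cU_{n-1}$ (each of which, by the inductive hypothesis together with the above observation, retains a tree-vertex-or-leaf child, since subdividing an edge $e = (a,b)$ replaces the child $b$ of $a$ by the tree vertex $u_e$, which is itself a tree child), the newly added subdivision vertices $u_e$ (each of which has the old endpoint of $e$ as a child along a tree edge, and that endpoint is a tree vertex or leaf), and the new reticulation $v$ (whose child is the leaf $x_n$). The base case $\cU_2$ is immediate since it is a single cherry. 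The main thing to verify carefully here is that subdividing an edge never destroys the tree-child witness of the vertex above it, which is exactly what the preliminary observation guarantees.

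\smallskip

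For the displaying claim, I would argue that an arbitrary binary phylogenetic $X$-tree $\cT$ can be embedded in $\cU_n$ by induction on $n$. Given $\cT$, delete the leaf $x_n$ from $\cT$ and suppress the resulting degree-two vertex to obtain a binary phylogenetic tree $\cT'$ on $\{x_1, \dots, x_{n-1}\}$; by the inductive hypothesis $\cU_{n-1}$ displays $\cT'$, fixing an embedding, hence a subtree of $\cU_{n-1}$ whose edges are used by $\cT'$. In $\cT$, the leaf $x_n$ hangs off some edge $e'$ of $\cT'$ (or off the root edge). The key point is that the construction of $\cU_n$ provides, for \emph{every} tree edge $e$ of $\cU_{n-1}$, a subdivision vertex $u_e$ with a reticulation edge to $v$ and hence a route to $x_n$; so I can choose the edge $e$ of $\cU_{n-1}$ corresponding to $e'$ in the embedding of $\cT'$, use the reticulation edge $(u_e, v)$ and the edge $(v, x_n)$ to attach $x_n$ in the correct location, and delete all the other reticulation edges into $v$. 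This yields an embedding of $\cT$ in $\cU_n$ after suppressing degree-two vertices.

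\smallskip

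I expect the main obstacle to be the bookkeeping in the displaying argument: one must match the position at which $x_n$ attaches in $\cT$ to a specific tree edge of $\cU_{n-1}$ that is actually used in the chosen embedding of $\cT'$, and verify that exactly one reticulation edge into $v$ is retained so that the result is a valid embedding with $x_n$ in the right place. Some care is also needed regarding the root edge and the convention of viewing $\rho$ as an in-degree-zero, out-degree-one vertex during the construction, so that the case where $x_n$ attaches above the original root is handled correctly. The tree-child verification, by contrast, should be routine once the preliminary observation about tree paths to the old leaves is in place.
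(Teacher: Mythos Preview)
Your inductive strategy for both claims matches the paper's, and your treatment of the tree-child property is considerably more detailed than the paper's one-line dismissal. The gap is in the displaying argument, at exactly the spot you label ``bookkeeping''---but it is more than that.

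You write that you will ``choose the edge $e$ of $\cU_{n-1}$ corresponding to $e'$ in the embedding of $\cT'$'', where $e'$ is the edge of $\cT'$ off which $x_n$ hangs. But an edge of $\cT'$ corresponds in the embedding to a \emph{path} in $\cU_{n-1}$, and you need a tree edge on that path, since only tree edges are subdivided when building $\cU_n$. Nothing you have said rules out the possibility that the natural candidate---say, the edge $(u,v)$ entering the vertex $v$ whose descendant leaves in the embedding are exactly the sibling cluster $\cC$ of $x_n$---is a reticulation edge of $\cU_{n-1}$. You have correctly named this as the obstacle, but you have not supplied the argument that resolves it.

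The paper's fix uses the tree-child property of $\cU_{n-1}$, already secured by the first half of the induction. If $(u,v)$ is a reticulation edge, then $v$ is a reticulation and has out-degree one in $\cU_{n-1}$; its unique child $w$ is, by the tree-child property, a tree vertex or leaf, so $(v,w)$ is a tree edge, lies in the embedding, and has the same descendant-leaf set $\cC$. Attaching $x_n$ via the subdivision vertex on $(v,w)$ then yields the desired embedding of $\cT$. Without this step---or an equivalent observation, such as noting that the \emph{last} edge on the path corresponding to $e'$ always enters a vertex of out-degree two or a leaf in the embedding, hence a tree vertex or leaf of $\cU_{n-1}$, and is therefore a tree edge---your induction does not close.
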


\begin{proof}
By construction of $\cU_n$ from $\cU_{n-1}$ it is straightforward to check that, as $\cU_2$ is tree-child, $\cU_n$ is tree-child. To see that $\cU_n$ displays all binary phylogenetic $X$-trees, we use induction on $n$. Clearly, $\cU_2$ displays the unique binary phylogenetic tree on two leaves. For $n\geq 3$, assume that the universal network $\cU_{n-1}$ on $X'=\{x_1,x_2,\ldots,x_{n-1}\}$ displays all binary phylogenetic $X'$-trees. Observe that $\cU_{n-1}$ can be obtained from $\cU_n$ by deleting $x_n$, the parent of $x_n$ and all their incident edges, and suppressing all resulting vertices with in-degree one and out-degree one.
Now, let $\cT_n$ be a binary phylogenetic $X$-tree, and let $\cT_{n-1}$ be $\cT_n|X'$. Furthermore, let $\cC$ be the subset of $X'$ that consists of the descendant leaves of the parent of $x_n$ in $\cT_n$. As $\cU_{n-1}$ displays $\cT_{n-1}$, there exist an embedding $\cE$ of $\cT_{n-1}$ in $\cU_{n-1}$ and an edge $(u,v)$ in $\cE$ such that the set of descendants of $v$ in $\cE$ is precisely $\cC$. If $(u,v)$ is a tree edge in $\cU_{n-1}$, then it is easily checked that $\cU_n$ displays $\cT_n$ by construction. On the other hand, if $(u,v)$ is a reticulation edge in $\cU_{n-1}$, then $v$ has out-degree one in $\cE$. Let $(v,w)$ be the unique edge in $\cE$ that is directed out of $v$. Note that, as $\cU_{n-1}$ is tree-child, $w$ is a tree vertex in $\cU_{n-1}$. Then, as $(v,w)$ is a tree edge in $\cU_{n-1}$ that is subdivided by a new vertex in the construction of $\cU_n$ from $\cU_{n-1}$, it again follows that $\cU_n$ displays $\cT_n$. This completes the proof of the theorem.\qed
\end{proof}

The next corollary is an immediate consequence of Theorem~\ref{t:all} and the fact that every phylogenetic tree has a binary refinement on the same leaf set.
\begin{corollary}\label{cor:tcdisplay2}
Let $\cP$ be a set of phylogenetic $X$-trees. There exists a tree-child network on $X$ that displays $\cP$.
\end{corollary}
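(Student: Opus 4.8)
The plan is to deduce Corollary~\ref{cor:tcdisplay2} directly from Theorem~\ref{t:all} by reducing the general case of (not necessarily binary) phylogenetic trees to the binary case. The key observation is that every phylogenetic $X$-tree has a binary refinement on the same leaf set $X$, and that the universal network $\cU_n$ on $X=\{x_1,x_2,\ldots,x_n\}$ already displays \emph{all} binary phylogenetic $X$-trees while being tree-child.

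First I would handle the trivial degenerate case. If $|X|=1$, then $\cP$ consists of a single-vertex tree and the single-vertex network on $X$ is tree-child and displays it; so assume $n=|X|\geq 2$.

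Next, for each tree $\cT\in\cP$, I would invoke the stated fact that $\cT$ admits a binary refinement $\cT'$ on the same leaf set $X$. The crucial point is the interplay between refinement and displaying: if a network $\cN$ displays a binary refinement $\cT'$ of $\cT$, then $\cN$ also displays $\cT$ itself. This follows essentially by definition, since the notion of ``displays'' in the excerpt is phrased in terms of the existence of a \emph{binary refinement} of the tree obtainable from $\cN$ by deleting edges, leaves, and out-degree-zero vertices (and suppressing in-/out-degree-one vertices); any embedding witnessing that $\cN$ displays $\cT'$ simultaneously witnesses, via the same binary refinement $\cT'$ of $\cT$, that $\cN$ displays $\cT$.

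Finally, applying Theorem~\ref{t:all}, the universal network $\cU_n$ on $X$ is tree-child and displays every binary phylogenetic $X$-tree, hence in particular the binary refinement $\cT'$ of each $\cT\in\cP$. By the previous step, $\cU_n$ therefore displays each $\cT\in\cP$, and so $\cU_n$ displays $\cP$. This exhibits a single tree-child network on $X$ displaying all of $\cP$, completing the proof. I do not anticipate a genuine obstacle here: the argument is a short chaining of Theorem~\ref{t:all} with the refinement fact, and the only point requiring a moment's care is verifying that ``displays a binary refinement of $\cT$'' implies ``displays $\cT$,'' which is immediate from the definition of displaying given in the excerpt.
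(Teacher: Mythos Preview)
Your proposal is correct and follows essentially the same approach as the paper: the corollary is stated as an immediate consequence of Theorem~\ref{t:all} together with the fact that every phylogenetic tree has a binary refinement on the same leaf set. Your additional care in verifying that displaying a binary refinement of $\cT$ implies displaying $\cT$ is appropriate and follows directly from the definition, as you note.
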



While every collection of phylogenetic $X$-trees can be displayed by a tree-child network on $X$, a simple counting argument shows that the analogous result is not true for {\it binary} tree-child networks. Specifically, a binary tree-child network on $X$ has at most $|X|-1$ reticulations~\cite[Proposition 1]{cardona12} and so displays at most $2^{|X|-1}$ distinct binary phylogenetic $X$-trees. But for large enough $X$, there are many more distinct binary phylogenetic $X$-trees than $2^{|X|-1}$. For related results, we refer the interested reader to ~\cite{simpson}.

\section{Cherry-picking characterisations}\label{sec:main}

In this section, we state the two cherry-picking characterisations whose proofs are given in the next section. Let $\cT$ be a  phylogenetic $X$-tree with root $\rho$, where $|X|\ge 2$. If $x$ is a leaf of $\cT$, we denote by $\cT\ba x$ the operation of deleting $x$ and its incident edge and, if the parent of $x$ in $\cT$ has out-degree two, suppressing the resulting degree-two vertex. Note that if the parent of $x$ is $\rho$ and $\rho$ has out-degree two, then $\cT\ba x$ denotes the operation of deleting $x$ and its incident edge, and then deleting $\rho$ and its incident edge. Observe that $\cT\ba x$ is a phylogenetic tree on $X-\{x\}$. A $2$-element subset $\{x, y\}$ of $X$ is a {\it cherry} of $\cT$ if $x$ and $y$ have the same parent. Clearly, every  phylogenetic tree with at least two leaves contains a cherry. In this paper, we typically distinguish the leaves in a cherry, in which case we write $\{x, y\}$ as the ordered pair $(x, y)$ depending on the roles of $x$ and $y$.


Let $\cT$ be a phylogenetic $X$-tree and let $(x, y)$ be an ordered pair of leaves in $X$. If $(x, y)$ is a cherry of $\cT$, then let $\cT'=\cT\backslash x$; otherwise, let $\cT'=\cT$. We say that $\cT'$ has been obtained from $\cT$ by {\em cherry picking $(x, y)$}. Now, let $\cP$ be a set of phylogenetic $X$-trees, and let
$$\sigma=(x_1, y_1), (x_2, y_2), \ldots, (x_s, y_s), (x_{s+1}, -)$$
be a sequence of ordered pairs in $X\times (X\cup \{-\})$ such that the following property is satisfied.\\

\noindent (P) For all $i\in \{1, 2, \ldots, s\}$, we have $x_i\not\in \{y_{i+1}, y_{i+2}, \ldots, y_s\}$.\\

\noindent Setting $\cP_0=\cP$ and, for all $i\in \{1, 2, \ldots, s\}$, setting $\cP_i$ to be the set of phylogenetic trees obtained from $\cP_{i-1}$ by cherry picking $(x_i, y_i)$ in each tree in $\cP_{i-1}$, we call $\sigma$ a {\em cherry-picking sequence of $\cP$} if each tree in $\cP_s$ consists of the single vertex $x_{s+1}$.\\

%
\noindent Furthermore, for all $i\in \{1, 2, \ldots,s\}$, we say that $\cP_i$ is obtained from $\cP$ by {\it picking $x_1, x_2, \ldots, x_{i}$}. 
Additionally, if $\cP_i\ne\cP_{i+1}$, then we refer to $(x_i,y_i)$ as being {\it essential.}  Moreover, if $\sigma$ is a cherry-picking sequence for $\cP$, then the {\it weight} of $\sigma$, denoted $w(\sigma)$, is the value $s+1-|X|$. Observe that, if $\sigma$ is a cherry-picking sequence of $\cP$, then
$$s+1-|X|\ge 0$$
as each element in $X$ must appear as the first element in an ordered pair in $\sigma$.

Now, let $\sigma$ be a cherry-picking sequence for $\cP$. We call $\sigma$ a {\it minimum cherry-picking sequence} of $\cP$ if $w(\sigma)$ is of smallest value over all cherry-picking sequences of $\cP$. This smallest value is denoted by $s(\cP)$. It will follow from the results in the next section (Lemma~\ref{tccp3}) that every collection $\cP$ of  phylogenetic trees has a cherry-picking sequence and so $s(\cP)$ is well defined.
Referring to Figure~\ref{fig:cps}, $$\sigma=(3,2),(3,4),(5,6),(5,4),(1,2),(4,2),(4,6),(2,6),(6,-)$$ is a cherry-picking sequence with weight $w(\sigma)=9-6=3$ for the four trees shown at the top of this figure.\\

\noindent {\bf Remark.} As noted in the introduction, cherry-picking sequences were introduced in~\cite{humphries13}. In the set-up of this paper, the difference is as follows. Instead of a cherry-picking sequence consisting of a set of ordered pairs, a cherry-picking sequence in~\cite{humphries13} consists of an ordering of the elements in $X$. Moreover, this ordering has the additional property that, for each $i\in\{1,2,\ldots,s\}$, $x_i$ is part of a cherry of {\it every} tree in $\cP_{i-1}$. Subsequently, $x_i$ is deleted from each tree in $\cP_{i-1}$, and the iterative process continues. The weighting of such a sequence is based, across  all $i$, on the number of different cherries of which $x_i$ is part of. It is not difficult to see how this could be interpreted as a special type of cherry-picking sequence as defined in this paper.\\ 

The first of our new characterisations is the next theorem. For a given set $\cP$ of  phylogenetic $X$-trees, it writes $h_{\rm tc}(\cP)$ in terms of cherry-picking sequences for $\cP$.

\begin{theorem}
Let $\cP$ be a set of phylogenetic $X$-trees. Then $$h_{\rm tc}(\cP)=s(\cP).$$
\label{picking1}
\end{theorem}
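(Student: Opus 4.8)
The plan is to prove the two inequalities $h_{\rm tc}(\cP)\le s(\cP)$ and $s(\cP)\le h_{\rm tc}(\cP)$ separately, by translating between cherry-picking sequences and tree-child networks in both directions.

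For the direction $h_{\rm tc}(\cP)\le s(\cP)$, I would start from a minimum cherry-picking sequence $\sigma=(x_1,y_1),\ldots,(x_s,y_s),(x_{s+1},-)$ of weight $s(\cP)$ and build a tree-child network $\cN$ displaying $\cP$ with $h(\cN)=s(\cP)$. The natural approach is to process $\sigma$ in reverse. Starting from the single vertex $x_{s+1}$, I would reinsert the leaves one pair at a time: when reading $(x_i,y_i)$ backwards, I reattach $x_i$ as a new leaf adjacent to (the edge leading to) $y_i$, thereby recreating the cherry $(x_i,y_i)$. The key bookkeeping point is that property (P) guarantees each $x_i$ is reintroduced after all occurrences in which it played the role of a $y_j$ have been accounted for, so the reattachment is always well defined and the construction stays acyclic. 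A leaf $x$ that appears as a first coordinate exactly once contributes no reticulation (it is a clean tree-like attachment), while each \emph{repeated} first-coordinate occurrence forces an in-degree increment at the vertex hosting that leaf, contributing exactly $1$ to $h(\cN)$; summing gives $h(\cN)=(s+1)-|X|=s(\cP)$. I would then verify that the resulting network is tree-child (each reattachment keeps a tree-path to some leaf) and that it displays every tree in $\cP$, the latter by tracking how each $\cP_{i-1}\to\cP_i$ step corresponds to an embedding.

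For the converse $s(\cP)\le h_{\rm tc}(\cP)$, I would take an optimal tree-child network $\cN$ displaying $\cP$ with $h(\cN)=h_{\rm tc}(\cP)$ and extract a cherry-picking sequence whose weight is at most $h(\cN)$. Here I would exploit the defining tree-child property: every non-leaf vertex has a child that is a tree vertex or leaf, which forces the existence of a leaf lying at the end of a tree path and guarantees a cherry-like structure that can be ``picked'' from $\cN$ and simultaneously from all displayed trees. I would argue inductively, repeatedly removing a leaf from $\cN$ (and recording the corresponding ordered pair) so that each removal reduces $h(\cN)$ by the number of reticulations thereby eliminated, while the resulting sequence satisfies (P). The weight of the extracted sequence counts precisely the ``extra'' first-coordinate appearances, which I would bound by the total in-degree excess $\sum_v(d^-(v)-1)=h(\cN)$.

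I expect the main obstacle to be the converse direction, specifically showing that an optimal tree-child network can always be unravelled into a valid cherry-picking sequence without the weight exceeding $h(\cN)$. The difficulty is twofold: first, choosing at each step a leaf whose removal is ``charged'' correctly to a single reticulation, so that the accounting is tight rather than merely an inequality in the wrong direction; and second, maintaining property (P) throughout the unravelling, since a careless choice of which leaf to pick could reintroduce a forbidden dependency between the $x_i$ and later $y_j$. The tree-child condition is exactly what should make this possible, so the crux is a careful structural lemma on tree-child networks (analogous to Lemma~\ref{tccp3}, which the excerpt already forward-references) guaranteeing that a suitable pickable cherry always exists and that its removal behaves well with respect to both $h$ and the displayed trees.
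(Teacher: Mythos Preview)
Your two-inequality plan and the forward direction (building $\cN$ from $\sigma$ by processing the sequence in reverse) match the paper: this is exactly the algorithm \textsc{Construct Tree-Child Network} given at the end of Section~\ref{pickingproofs}, and your accounting of $h(\cN)$ via repeated first coordinates is correct.

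The gap is in the converse. You frame the extraction as ``repeatedly removing a leaf from $\cN$ and recording the corresponding ordered pair'', but that cannot yield the right weight: if every step deletes a leaf and records one pair, the resulting sequence has length $|X|$ and weight $0$ regardless of $h(\cN)$. The missing structural idea is that tree-child networks admit \emph{two} kinds of local reduction, and only one of them deletes a leaf. The paper shows (Lemma~\ref{l:cherry}) that a tree-child network with at least two leaves always contains either a \emph{cherry} $\{x,y\}$ (two leaves with a common parent) or a \emph{reticulated cherry} $\{x,y\}$ (the parent $p_x$ of $x$ is a reticulation, and $(p_y,p_x)$ is an edge). Reducing a cherry deletes one leaf and leaves $h$ unchanged; reducing a reticulated cherry deletes only the reticulation edge $(p_y,p_x)$, keeps both leaves, and decreases $h$ by exactly $1$. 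Each reduction records one ordered pair $(x,y)$. It is precisely the reticulated-cherry reductions that cause $x$ to appear repeatedly as a first coordinate, and there are $h(\cN)$ of them, so $w(\sigma)=h(\cN)$. Your ``remove a leaf'' picture collapses this distinction and with it the whole accounting.

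You are right that maintaining property~(P) is the delicate point, and here too the paper supplies a concrete device you do not. Before starting, it fixes pairwise vertex-disjoint tree paths from the root and from each reticulation $v_j$ to distinguished leaves $\ell_\rho,\ell_1,\ldots,\ell_r$. When reducing a cherry $\{x_i,y_i\}$, the leaf chosen for deletion is one that is \emph{not} among the $\ell_j$ whose reticulation $v_j$ still survives in the current network; this is exactly what prevents a deleted $x_i$ from later reappearing as some $y_j$. Your proposal correctly flags the obstacle but does not provide this mechanism, and the single-operation framing leaves no place to insert it.
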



To state the second characterisation, we require an additional concept. Let $\cT$ be a phylogenetic $X$-tree. Consider the operation of adjoining a new leaf $z$ to $\cT$ in one of the following three ways.

\begin{enumerate}[(i)]
\item Subdivide an edge of $\cT$ with a new vertex, $u$ say, and add the edge $(u, z)$.
\item View the root $\rho$ of $\cT$ as a degree-one vertex adjacent to the original root and add the edge $(\rho, z)$. 
\item Add the edge $(v, z)$, where $v$ is an interior vertex of $\cT$.
\end{enumerate}
\noindent We refer to this operation as {\it attaching a new leaf $z$} to $\cT$. More generally, if $Z$ is a finite set of elements such that $X\cap Z$ is empty, then {\it attaching $Z$} to $\cT$ is the operation of attaching, in turn, each element in $Z$ to $\cT$ to eventually obtain a phylogenetic tree on $X\cup Z$. We refer to $Z$ as a set of {\it auxiliary leaves}. Lastly, {\it attaching Z} to a set $\cP$ of phylogenetic $X$-trees is the operation of attaching $Z$ to each tree in $\cP$.

Let $\cP$ be a set of phylogenetic $X$-trees. 
A sequence $$\sigma=(x_1, y_1), (x_2, y_2), \ldots, (x_s, y_s), (x_{s+1}, -)$$
of ordered pairs in $(X\cup Z)\times (X\cup Z\cup\{-\})$ that satisfies (P) is a {\it leaf-added cherry-picking sequence} for $\cP$  if it is a cherry-picking sequence of a set of  phylogenetic trees obtained from $\cP$ by attaching $Z$. As for cherry-picking sequences, the {\it weight} of $\sigma$, denoted $w(\sigma)$, is the value $s+1-(|X|+|Z|)$. We denote the minimum weight amongst all leaf-added cherry-picking sequences of $\cP$ by $s_+(\cP)$. Of course, $s_+(\cP)\le s(\cP)$, but this inequality can also be strict. To illustrate, consider the two sets $\cP$ and $\cP'$ of phylogenetic trees shown in Figure~\ref{fig:leaf-added}. Now
\begin{eqnarray}
\sigma&=&(4,5),(4,1),(4,3),(5,6),(5,3),(5,8),\nonumber\\
&&(2,3),(3,1),(6,7),(7,8),(1,8),(8,-)\nonumber
\end{eqnarray}
is a cherry-picking sequence for $\cP$ of weight $w(\sigma)=12-8=4$. In fact, it follows from~\cite{iersel16,kelk12COM} that 
$h_{\rm tc}(\cP)=4$ (see Section~\ref{sec:conclu} for details). On the other hand,
\begin{eqnarray}
\sigma'&=&(5,z),(5,8),(4,z),(4,1),(z,3),(z,6),\nonumber\\
&&(2,3),(3,1),(6,7),(7,8),(1,8),(8,-)\nonumber
\end{eqnarray}
is a cherry-picking sequence for $\cP'$ of weight $w(\sigma')=12-9=3$. 
Since $\cP'$ can be obtained by attaching $z$ to $\cP$, it follows that $\sigma'$ is a leaf-added cherry-picking sequence for $\cP$ and $s_+(\cP)\le 3$.

For a given set $\cP$ of phylogenetic $X$-trees, the next theorem characterises $h(\cP)$ in terms of leaf-added cherry-picking sequences.

\begin{theorem}
Let $\cP$ be a set of  phylogenetic $X$-trees. Then $$h(\cP)=s_+(\cP).$$
\label{picking2}
\end{theorem}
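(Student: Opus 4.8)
The plan is to deduce Theorem~\ref{picking2} from Theorem~\ref{picking1} by exploiting the relationship between the full network class and the tree-child class via the attachment of auxiliary leaves. The guiding intuition is that $h(\cP)$ minimises over \emph{all} phylogenetic networks, whereas $s(\cP)=h_{\rm tc}(\cP)$ is the tree-child optimum; the auxiliary leaves in a leaf-added cherry-picking sequence are precisely the bookkeeping device that lets a tree-child sequence simulate the extra freedom available to a general network. So I would prove the two inequalities $h(\cP)\le s_+(\cP)$ and $s_+(\cP)\le h(\cP)$ separately.

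For the direction $h(\cP)\le s_+(\cP)$, I would take a minimum leaf-added cherry-picking sequence $\sigma$ for $\cP$, witnessed by an auxiliary set $Z$ and a collection $\cP^+$ of phylogenetic trees on $X\cup Z$ obtained by attaching $Z$. Since $\sigma$ is an honest cherry-picking sequence for $\cP^+$, Theorem~\ref{picking1} gives a tree-child network $\cN^+$ on $X\cup Z$ displaying $\cP^+$ with $h(\cN^+)=w(\sigma)+$ (the difference between $s(\cP^+)$ and $w(\sigma)$); more carefully, $w(\sigma)$ counts the sequence length relative to $|X|+|Z|$, and one should argue that a tree-child network realising this weight exists, then delete the auxiliary leaves in $Z$, suppressing resulting degree-two vertices, to obtain a (not necessarily tree-child) network $\cN$ on $X$ that still displays $\cP$. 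The key point is that removing the $|Z|$ auxiliary leaves \emph{cannot increase} and in the right accounting exactly compensates for the hybridisation number, so that $h(\cN)=w(\sigma)=s_+(\cP)$, whence $h(\cP)\le s_+(\cP)$. The delicate part here is checking that each auxiliary leaf contributes a controlled amount to the reticulation count and that its deletion is bookkept correctly by the $-|Z|$ term in the weight.

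For the reverse direction $s_+(\cP)\le h(\cP)$, I would start from an optimal phylogenetic network $\cN$ on $X$ displaying $\cP$ with $h(\cN)=h(\cP)$, and convert it into a tree-child network by attaching auxiliary leaves at the ``offending'' vertices: for each non-leaf vertex that fails the tree-child condition (i.e.\ whose every child is a reticulation), hang a new auxiliary leaf so as to repair tree-childness without changing the hybridisation number. Attaching these leaves to $\cN$ and correspondingly attaching a matching set $Z$ to each tree in $\cP$ should produce a tree-child network $\cN'$ displaying the augmented trees $\cP^+$ with $h(\cN')=h(\cN)$. Applying Theorem~\ref{picking1} to $\cP^+$ yields a cherry-picking sequence for $\cP^+$ of weight $h(\cN')=h(\cP)$, which is by definition a leaf-added cherry-picking sequence for $\cP$, giving $s_+(\cP)\le h(\cP)$.

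I expect the main obstacle to be the bookkeeping of \emph{how many} auxiliary leaves are needed and where they attach, together with verifying that the weight normalisation $w(\sigma)=s+1-(|X|+|Z|)$ matches $h(\cN)$ exactly rather than merely up to an inequality. In particular, the forward direction requires that deleting the auxiliary leaves reduces the hybridisation number by precisely $|Z|$ (each auxiliary leaf resolving exactly one reticulation), and the backward direction requires that the repair operation adds no new reticulations; both claims hinge on a careful structural analysis of tree-child networks and how auxiliary leaves sit on tree paths into reticulations. I would handle this by analysing the embeddings of the trees and tracking, reticulation by reticulation, the effect of each auxiliary leaf, likely reusing the embedding-based arguments already developed for Theorem~\ref{picking1}. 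Establishing an a priori bound on $|Z|$ (addressed in Section~\ref{sec:bound}) is not strictly needed for the equality but reassures that $s_+(\cP)$ is well defined and finite.
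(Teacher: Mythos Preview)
Your overall two-inequality strategy via Theorem~\ref{picking1} is exactly the paper's, and the backward direction (repair an optimal network $\cN$ into a tree-child network by hanging auxiliary leaves, then invoke Theorem~\ref{picking1}) is essentially right; the paper does this more crudely by subdividing \emph{every} reticulation edge with a new leaf, but your targeted repair at vertices violating the tree-child property also works and keeps $h(\cN')=h(\cN)$.

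The forward direction, however, contains a genuine misconception. You write that it ``requires that deleting the auxiliary leaves reduces the hybridisation number by precisely $|Z|$ (each auxiliary leaf resolving exactly one reticulation)''. This is false and, if you tried to prove it, you would fail: the parent of an auxiliary leaf in $\cN^+$ is a tree vertex, not a reticulation, so deleting the leaf and suppressing that vertex does not remove any reticulation at all. Fortunately no such statement is needed. The weight $w(\sigma)=s+1-(|X|+|Z|)$ is already computed relative to the full leaf set $X\cup Z$, so Theorem~\ref{picking1} (more precisely Lemma~\ref{tccp1}) gives directly a tree-child network $\cN^+$ on $X\cup Z$ with $h(\cN^+)\le w(\sigma)$. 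Deleting from $\cN^+$ everything not on a directed path from the root to a leaf in $X$ and suppressing degree-two vertices can only \emph{decrease or preserve} the hybridisation number, so $h(\cN)\le h(\cN^+)\le w(\sigma)=s_+(\cP)$, and hence $h(\cP)\le s_+(\cP)$. There is no ``compensation'' by $|Z|$ anywhere; the inequality you need is one-sided and immediate. Once you drop the spurious exact-reduction requirement, your sketch matches the paper's proof.
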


\begin{figure}[t]
\center
\scalebox{1.2}{\input{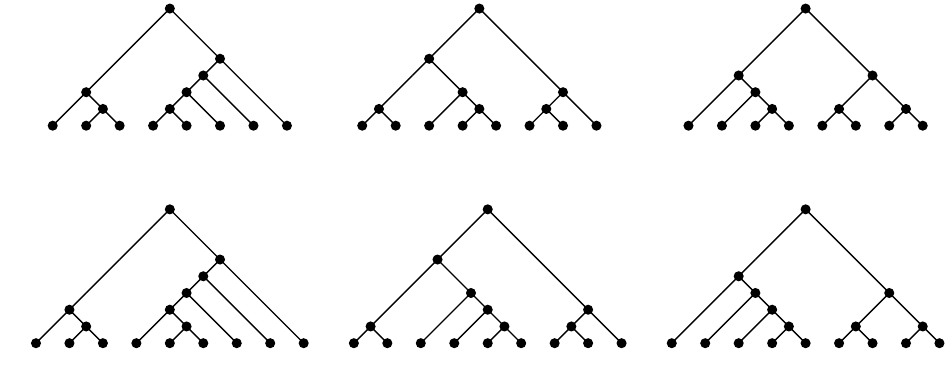_t}}
\caption{Two sets $\cP$ and $\cP'$ of phylogenetic trees, where $\cP'$ is obtained  by attaching $z$ to $\cP$. (In parts, adapted from~\cite[Figure 1]{iersel16}.)}
\label{fig:leaf-added}
\end{figure}

It is worth noting that, for a set $\cP$ of  phylogenetic $X$-trees, it follows from Theorems~\ref{picking1} and~\ref{picking2} that $h_{\rm tc}(\cP)$ and $h(\cP)$ can be determined without constructing a phylogenetic network.

\section{Proofs of Theorems~\ref{picking1} and~\ref{picking2}}
\label{pickingproofs}

In this section, we prove Theorems~\ref{picking1} and~\ref{picking2}. Most of the work is in proving Theorem~\ref{picking1}. We begin by showing that $h_{\rm tc}(\cP)\le s(\cP)$.

\begin{lemma}
Let $\cP$ be a set of  phylogenetic $X$-trees. Let $\sigma$ be a cherry-picking sequence for $\cP$. Then there exists a tree-child network $\cN$ on $X$ that displays $\cP$ with $h(\cN)\le w(\sigma)$ satisfying the following properties:
\begin{enumerate}[{\rm (i)}]
\item If $u$ is a tree vertex in $\cN$ and not a parent of a reticulation, then there are leaves $\ell_1$ and $\ell_2$ at the end of tree paths starting at the children $v_1$ and $v_2$ of $u$, respectively, such that $(\ell_1, \ell_2)$ is an element in $\sigma$.

\item If $u$ is a tree vertex in $\cN$ and a parent of a reticulation $v$, then there are leaves $\ell_u$ and $\ell_v$ at the end of tree paths starting at $u$ and $v$, respectively, such that $(\ell_v, \ell_u)$ is an element in $\sigma$.
\end{enumerate}
\label{tccp1}
\end{lemma}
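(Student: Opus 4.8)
The plan is to induct on the length of $\sigma$, peeling off its first pair. Write $\sigma'=(x_2,y_2),\ldots,(x_{s+1},-)$ and observe that $\sigma'$ is a cherry-picking sequence for $\cP_1$ (the set obtained from $\cP$ by picking $(x_1,y_1)$) that again satisfies (P), with $w(\sigma')=w(\sigma)$ when $x_1\notin\cL(\cP_1)$ and $w(\sigma')=w(\sigma)-1$ when $x_1\in\cL(\cP_1)$. By the induction hypothesis there is a tree-child network $\cN_1$ on $\cL(\cP_1)$ displaying $\cP_1$ with $h(\cN_1)\le w(\sigma')$ and satisfying (i) and (ii) with respect to $\sigma'$. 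I would then build $\cN$ from $\cN_1$ by re-inserting $x_1$, and verify that the three conclusions (displays $\cP$, the bound on $h$, and (i) and (ii)) are restored for the longer sequence $\sigma$; the base case $|X|=1$ is the single-vertex network. Since (i) and (ii) together force every tree vertex to have a tree-vertex or leaf child and every reticulation to have a tree-vertex or leaf child, maintaining (i) and (ii) will yield tree-childness for free.

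There are two ways to re-insert $x_1$. If $x_1\notin\cL(\cN_1)$, then $(x_1,y_1)$ is necessarily a cherry of every tree in $\cP$ and $y_1$ is a leaf of $\cN_1$; I would subdivide the edge entering $y_1$ by a new tree vertex $u$ and attach $x_1$ to $u$. This adds no reticulation, so $h(\cN)=h(\cN_1)\le w(\sigma')=w(\sigma)$, and $u$ witnesses (i) via the pair $(x_1,y_1)$. If instead $x_1\in\cL(\cN_1)$, I would create a reticulation so that $x_1$ can additionally be displayed as a cherry with $y_1$: subdivide the edge entering $y_1$ by a new tree vertex $u$, and either, when the parent $P$ of $x_1$ is a reticulation, add the edge $(u,P)$, or, when $P$ is a tree vertex, subdivide the edge $(P,x_1)$ by a new reticulation $r$ and add $(u,r)$. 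Either way exactly one reticulation edge is added, so $h(\cN)=h(\cN_1)+1\le w(\sigma')+1=w(\sigma)$, and the new vertex $u$ witnesses (ii) via $(x_1,y_1)$. In all cases, displaying $\cP$ is checked by taking an embedding of the relevant tree of $\cP_1$ in $\cN_1$ and routing $x_1$ either through the new $y_1$-branch, for trees in which $(x_1,y_1)$ is a cherry, or through its old lineage, otherwise.

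The main obstacle is showing that re-inserting $x_1$ as a reticulation preserves (i) and (ii) for $\cN$. The delicate point is the \emph{old} parent $P$ of $x_1$: after subdividing $(P,x_1)$, the vertex $P$ becomes a parent of the reticulation $r$, so it must now satisfy (ii) rather than (i), and this requires two things — that $P$ is not \emph{already} a parent of a reticulation (otherwise $P$ would acquire two reticulation children and fail tree-childness), and that the witnessing pair is correctly oriented, namely $(x_1,\ell)\in\sigma$ rather than $(\ell,x_1)$, where $\ell$ is the leaf reached from $P$'s other child. Both requirements follow from property (P) applied to the index $1$: since $x_1\notin\{y_2,\ldots,y_s\}$, the element $x_1$ never occurs as a second coordinate in $\sigma$, so no pair of the form $(\ell,x_1)$ exists. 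Hence the (i)-witness of $P$ in $\cN_1$, one of whose leaves is the leaf child $x_1$, must be oriented $(x_1,\ell)$, which is exactly the (ii)-pair needed; and the configuration in which $x_1$ is the tree-side witness of a reticulation-parent $P$ is likewise excluded, as it too would force a pair $(\cdot,x_1)$ into $\sigma$. Once this orientation point is settled, the remaining verifications — that untouched vertices keep their witnesses, because subdividing an edge by a tree vertex preserves tree paths, and that the newly created $u$ and $r$ behave as claimed — are routine.
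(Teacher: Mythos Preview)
Your overall plan—induct on the length of $\sigma$, strip off $(x_1,y_1)$, apply the hypothesis to the shorter sequence, then re-insert $x_1$ either as a fresh leaf or via a new reticulation edge—is exactly the paper's approach, and your treatment of the delicate orientation issue at $P$ using property~(P) is correct and matches the paper's argument almost verbatim.

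There is, however, one genuine gap. When $(x_1,y_1)$ is a cherry in some trees of $\cP$ but not in others, the set $\cP_1$ you produce contains trees on $X$ together with trees on $X-\{x_1\}$. The lemma, and hence your induction hypothesis, is stated only for a set of phylogenetic $X$-trees on a \emph{common} leaf set; the very definition of a cherry-picking sequence, and of its weight $s+1-|X|$, presupposes such a set. So you cannot simply ``observe that $\sigma'$ is a cherry-picking sequence for $\cP_1$'' and invoke the hypothesis. The paper resolves this not by loosening the hypothesis but by \emph{padding}: for each tree that lost $x_1$, it re-attaches $x_1$ at a carefully chosen vertex (determined by the next occurrence $(x_1,y_i)$ of $x_1$ in $\sigma'$) so that $\sigma'$ is a genuine cherry-picking sequence for the resulting collection $\cP''$ of phylogenetic $X$-trees. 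Verifying that this placement works takes a short but non-obvious argument. Once that is in place, the induction yields a network $\cN'$ on $X$ already containing $x_1$, and the re-insertion step proceeds exactly as you describe.

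You could alternatively repair the gap by strengthening the statement being proved to allow $\cP$ to consist of trees whose leaf sets are arbitrary subsets of some ambient set $X$ (taking $|X|$ in the weight to be the size of the union); the induction then goes through along your lines. Either way, the fix requires real work that your sketch omits.
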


\begin{proof}
Let
$$\sigma=(x_1, y_1), (x_2, y_2), \ldots, (x_s, y_s), (x_{s+1}, -)$$
be a cherry-picking sequence for $\cP$. The proof is by induction on $s$. If $s=0$, then $|X|=1$ and each tree in $\cP$ consists of the single vertex in $X$. It immediately follows that choosing $\cN$ to be the phylogenetic network consisting of the single vertex in $X$ establishes the lemma for $s=0$.

Now suppose that $s\ge 1$, and that the lemma holds for all cherry-picking sequences for sets of phylogenetic trees on the same leaf set whose length is at most $s$. Let
$$\sigma'=(x_2, y_2), (x_3, y_3), \ldots, (x_s, y_s), (x_{s+1}, -)$$
and let $\cP'$ be the set of phylogenetic trees obtained from $\cP$ by picking $x_1$.

First assume that each tree in $\cP'$ has the same leaf set, namely $X'=X-\{x_1\}$. Then $\sigma'$ is a cherry-picking sequence for $\cP'$. By induction, there is a tree-child network $\cN'$ on $X'$ that displays $\cP'$ with $h(\cN')\le w(\sigma')$ and satisfies (i) and (ii). Since each tree in $\cP'$ has the same leaf set, $\{x_1, y_1\}$ is a cherry in each tree in $\cP$. Therefore, as $\cN'$ displays a binary refinement of each tree in $\cP'$, the tree-child network obtained from $\cN'$ by subdividing the edge directed into $y_1$ with a new vertex $u$ and adding the edge $(u, x_1)$ displays $\cP$. Furthermore, as $h(\cN')\le w(\sigma')$ and $\cN'$ satisfies (i) and (ii) relative to $\sigma'$, we have $h(\cN)=h(\cN')\le w(\sigma')=w(\sigma)$ and it is easily seen that $\cN$ satisfies (i) and (ii) relative to $\sigma$.

Now assume that not every tree in $\cP'$ has the same leaf set.  Let $\cP'_1$ denote the subset of trees in $\cP'$ whose leaf set is $X-\{x_1\}$. Since $\cP'-\cP'_1$ is non-empty, there exists some $i$ with $i\in\{2,3,\ldots,s+1\}$ such that $x_i=x_1$. Note that $(x_1, -)$ is not in $\sigma$; otherwise there is an ordered pair in $\sigma$ whose second coordinate is $x_1$ and so $\sigma$ is not a cherry-picking sequence for $\cP$. Let $(x_1, y_i)$ be the first ordered pair in $\sigma'$ whose first coordinate is $x_1$. Let $\cT_1$ be a tree in $\cP'_1$. Consider the process of picking, in order, $(x_2,y_2),(x_3,y_3),\ldots,(x_{i-1},y_{i-1})$ from $\cT_1$.
Let $X_1$ denote the subset of leaves in $X-\{x_1\}$ that are deleted from $\cT_1$ in this process. Observe that, as $y_i$ is the second coordinate in $(x_i, y_i)$, we have $y_i\not\in X_1$.

We next add $x_1$ to $\cT_1$ to obtain a phylogenetic $X$-tree for which $\sigma'$ is a cherry-picking sequence. Let $w$ be the (unique) vertex of $\cT_1$ that is closest to the root with the property that $y_i$ is a descendant leaf of $w$, and the child  of $w$ on the path from $w$ to $y_i$ has all its descendant leaves in $X_1\cup \{y_i\}$.
Let $\cT'_1$ be the phylogenetic $X$-tree obtained from $\cT_1$ by
adding the edge $(w, x_1)$. We now show that $\sigma'$ is a cherry-picking sequence for $\cT'_1$. Suppose that $\sigma'$ is not a cherry-picking sequence for $\cT'_1$. Let $u$ be the parent of $w$ in $\cT_1'$. Then amongst the first $i-2$ ordered pairs in $\sigma'$ is an ordered pair of the form $(x_j, y_i)$ 
that is essential when, the ordered pairs in $\sigma'$ are (in order) picked from $\cT_1$,
where $x_j$ is a descendant leaf of $u$ in $\cT_1'$. But then, each descendant leaf of $w$ is in $X_1\cup\{y_i\}$, contradicting the choice of $w$.

Repeating this placement of $x_1$ for each tree in $\cP'_1$, we obtain a set $\cP''_1$ of phylogenetic $X$-trees from $\cP'_1$. Let $\cP''=\cP''_1\cup (\cP'-\cP'_1)$ and observe that $\sigma'$ is a cherry-picking sequence for $\cP''$. Therefore, by induction, there is a tree-child network $\cN'$ on $X$ that displays $\cP''$ with $h(\cN')\le w(\sigma')$ and satisfies (i) and (ii).

Let $p$ denote the parent of $x_1$ in $\cN'$. If $p$ is a reticulation, let $\cN$ be the phylogenetic network obtained from $\cN'$ by subdividing the edge directed into $y_1$ with a new vertex $u$ and adding the edge $(u, p)$. Since $\cN'$ is tree-child and displays $\cP'$, it follows that $\cN$ is tree-child and displays $\cP$. Furthermore,
$$h(\cN)=h(\cN')+1\le w(\sigma')+1=w(\sigma).$$
Additionally, as $(x_1, y_1)\in \sigma$, it also follows that, as $\cN'$ satisfies (i) and (ii) relative to $\sigma'$, we have $\cN$ satisfies (i) and (ii) relative to $\sigma$.

Thus we may assume that $p$ is a tree vertex. Let $w$ denote the child of $p$ that is not $x_1$ in $\cN'$. If $w$ is a reticulation, then, as $\cN'$ satisfies (ii), $\sigma'$ contains a cherry in which $x_1$ is the second coordinate. But $(x_1, y_1)$ is the first ordered pair in $\sigma$ and so, as $\sigma$ satisfies (P), $x_1$ is never the second coordinate in an ordered pair in $\sigma$; a contradiction. Therefore $w$ is either a tree vertex or a leaf in $\cN'$. So, as $\cN'$ satisfies (i) and no ordered pair has $x_1$ as the second coordinate, it follows that $\sigma'$ contains an ordered pair, $(x_1, y_j)$ say, where $y_j$ is the leaf at the end of a tree path in $\cN'$ starting at $w$. Now let $\cN$ be the phylogenetic network obtained from $\cN'$ by subdividing the edges directed into $y_1$ and $x_1$ with new vertices $u$ and $v$, respectively, and adding the edge $(u, v)$. Since $\cN'$ is tree-child and $h(\cN')\le w(\sigma')$, it is easily seen that $\cN$ is tree-child and $h(\cN)=h(\cN')+1\le w(\sigma')+1=w(\sigma)$. Furthermore, $\cN'$ displays $\cP'-\cP'_1$ as well as $\cP''_1$, and therefore $\cP'_1|(X-\{x_1\})$. Thus $\cN$ displays $\cP$. To see that $\cN$ satisfies (i) and (ii) relative to $\sigma$, it suffices to show that $\cN$ satisfies (ii) for $p$ and $u$. Indeed, the two ordered pairs $(x_1, y_j)$ and $(x_1, y_1)$ in $\sigma$ verify (ii) for $p$ and $u$, respectively. This completes the proof of the lemma.\qed
\end{proof}

The next corollary immediately follows from Lemma~\ref{tccp1}.

\begin{corollary}
Let $\cP$ be a set of phylogenetic $X$-trees. Then $h_{\rm tc}(\cP)\le s(\cP)$.
\label{tccp2}
\end{corollary}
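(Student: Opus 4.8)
The plan is to obtain the corollary as a direct application of Lemma~\ref{tccp1} to an optimal cherry-picking sequence, so essentially no new work is required. First I would invoke the fact, recorded in Section~\ref{sec:main} and underpinned by Lemma~\ref{tccp3}, that every collection $\cP$ of phylogenetic trees admits at least one cherry-picking sequence. Consequently $s(\cP)$ is well defined and is attained by some minimum cherry-picking sequence $\sigma$ with $w(\sigma)=s(\cP)$.

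Next I would apply Lemma~\ref{tccp1} to this particular $\sigma$. The lemma guarantees the existence of a tree-child network $\cN$ on $X$ that displays $\cP$ and satisfies $h(\cN)\le w(\sigma)$. The structural conditions (i) and (ii) produced by the lemma play no role in this corollary and may simply be ignored here; all that is needed is the bound on the hybridisation number.

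Finally I would appeal to the definition of $h_{\rm tc}(\cP)$ as the minimum of $h(\cN')$ taken over all tree-child networks $\cN'$ on $X$ that display $\cP$. Since the network $\cN$ produced above is one such competitor in this minimisation, we immediately obtain
$$h_{\rm tc}(\cP)\le h(\cN)\le w(\sigma)=s(\cP),$$
which is exactly the claimed inequality.

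There is no genuine obstacle at this stage, as all of the substantive content lies in the inductive construction carried out in Lemma~\ref{tccp1}, which converts an arbitrary cherry-picking sequence into a tree-child network whose hybridisation number is bounded by the weight of the sequence. The only point requiring a moment's care is confirming that a minimum cherry-picking sequence actually exists, so that taking $\sigma$ with $w(\sigma)=s(\cP)$ is legitimate; this is precisely the well-definedness of $s(\cP)$ noted above.
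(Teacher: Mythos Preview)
Your proposal is correct and matches the paper's approach: the paper states that the corollary ``immediately follows from Lemma~\ref{tccp1}'', and your argument spells this out by applying the lemma to a minimum cherry-picking sequence and using the definition of $h_{\rm tc}(\cP)$. Your added remark about the well-definedness of $s(\cP)$ is a harmless bit of extra care.
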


For the proof of the converse of Corollary~\ref{tccp2}, we begin with an additional lemma. Let $\cN$ be a phylogenetic network, and let $x$ and $y$ be two leaves in $\cN$. Generalising cherries to phylogenetic networks, we say that $\{x, y\}$ is a {\it cherry} in $\cN$ if $x$ and $y$ have a common parent. Moreover, we call $\{x, y\}$ a {\it reticulated cherry} if the parent of $x$, say $p_x$, and the parent of $y$, say $p_y$, are joined by a reticulation edge $(p_y, p_x)$ in which case we say that $x$ is the {\it reticulation leaf} relative to $\{x,y\}$. We next define two operations on $\cN$. First, {\it reducing a cherry $\{x,y\}$} is the operation of deleting one of the two leaves in $\{x,y\}$, and suppressing the resulting degree-two vertex. Second, {\it reducing a reticulated cherry $\{x,y\}$} is the operation of deleting the reticulation edge joining the parents of $x$ and $y$ and suppressing any resulting degree-two vertices. The proof of the next lemma is similar to the analogous result for binary tree-child networks [5, Lemma 4.1] and is omitted.

\begin{lemma}\label{l:cherry}
Let $\cN$ be a tree-child network on $X$. Then the following hold.
\begin{enumerate}[{\rm (i)}]
\item  If $|X|\geq 2$, then $\cN$ contains either a cherry or a reticulated cherry.
\item If $\cN'$ is obtained from $\cN$ by reducing either a cherry or a reticulated cherry, then $\cN'$ is a tree-child network.
\end{enumerate}
\end{lemma}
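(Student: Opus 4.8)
The plan is to prove the two parts by different means: part~(i) by a global extremal argument on a longest root-to-leaf path, and part~(ii) by a local check at the few vertices that a reduction disturbs. Throughout I will use the elementary fact that in a tree-child network every parent of a reticulation is a tree vertex: such a parent is not a leaf, and it cannot be a reticulation, since a reticulation has out-degree one and its unique child would then be a reticulation, violating the tree-child condition.

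For part~(i), I would fix a directed path $P=(\rho=v_0,v_1,\ldots,v_m=\ell)$ to a leaf $\ell$ whose length $m$ is maximum among all root-to-leaf paths; this exists because $\cN$ is finite and acyclic. By maximality, if a vertex $u$ is reachable from $\rho$ by a path of length $k$, then every directed path from $u$ to a leaf has length at most $m-k$. I would then examine $v_{m-1}$, the parent of $\ell$. If $v_{m-1}$ is a tree vertex, both of its children sit at distance $m$ from $\rho$ and so, by maximality, must be leaves, producing a cherry. If $v_{m-1}$ is a reticulation, then $v_{m-2}$ is a tree vertex by the fact above; let $c$ be its child other than $v_{m-1}$. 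Maximality forces every child of $c$ to be a leaf, and $c$ cannot itself be a reticulation (else $v_{m-2}$ would have two reticulation children, contradicting tree-child). Hence either $c$ is a leaf, giving the reticulated cherry $\{\ell,c\}$ with reticulation edge $(v_{m-2},v_{m-1})$, or $c$ is a tree vertex with two leaf children, again giving a cherry.

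For part~(ii), I would track the affected vertices in each reduction. Reducing a cherry $\{x,y\}$: its common parent $p$ has out-degree at least two, so $p$ is a tree vertex; deleting $x$ and suppressing $p$ simply replaces, at the parent $q$ of $p$, the child $p$ by the leaf $y$ (when $p$ is the root we instead have $X=\{x,y\}$ and $\cN'$ is a single vertex). Since $q$ is the only vertex that had $p$ as a child and now has the leaf child $y$, and since the removed vertices are $x$ and $p$ while all other adjacencies are unchanged, $\cN'$ is tree-child. Reducing a reticulated cherry $\{x,y\}$ with reticulation leaf $x$, parent $p_x$ a reticulation and parent $p_y$ a tree vertex, deletes the edge $(p_y,p_x)$. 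First, $p_y$ is not the root, for otherwise the remaining parents of $p_x$ would be unreachable from $\rho$; hence $p_y$ drops to out-degree one and is suppressed, giving its unique parent $q_y$ the leaf child $y$. The reticulation $p_x$ loses one in-edge: if its in-degree falls to one it is suppressed, giving its surviving parent $q_x$ the leaf child $x$, and otherwise it remains a reticulation with leaf child $x$. In either case every disturbed vertex ends with a leaf child; the removed internal vertex $p_y$ is a tree vertex whose only parent $q_y$ now has the leaf child $y$, while $p_x$, being a reticulation, was never a tree-child witness, so no vertex is stripped of its witness. A routine check shows that no parallel edges are created and that acyclicity is preserved, whence $\cN'$ is a tree-child network.

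I expect the bookkeeping in part~(ii) for the reticulated cherry to be the main obstacle: one must confirm that suppressing $p_y$ (and possibly $p_x$) never deprives a vertex of its unique tree-vertex-or-leaf child, which hinges on identifying exactly which vertices had $p_y$ or $p_x$ as a child, together with the non-obvious observation that $p_y$ cannot be the root. Part~(i) is the conceptually cleaner half, its only subtlety being the exclusion of the configuration in which $v_{m-2}$ has two reticulation children.
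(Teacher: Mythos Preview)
The paper does not actually prove this lemma; it omits the proof, pointing to the analogous result for binary tree-child networks in~\cite[Lemma~4.1]{bordewich16}. Your self-contained argument is correct and is essentially the standard one: the longest root-to-leaf path argument for~(i) is exactly the expected approach, and your local bookkeeping for~(ii)---tracking the tree-child witness at each of the handful of disturbed vertices, and using that a reticulation can never have served as such a witness---is the right way to organise the case analysis. The observation that $p_y$ cannot be the root (since otherwise the remaining parents of $p_x$, reachable only through $p_x$ itself, would force a directed cycle) is a detail worth making explicit, as you do.
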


\begin{lemma}\label{tccp3}
Let $\cP$ be a set of  phylogenetic $X$-trees. Then $h_{\rm tc}(\cP)\ge s(\cP)$.
\end{lemma}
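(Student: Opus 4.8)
The plan is to start from a tree-child network $\cN$ on $X$ that displays $\cP$ and realises the minimum, that is, $h(\cN)=h_{\rm tc}(\cP)$, and to read off from a reduction of $\cN$ a cherry-picking sequence $\sigma$ for $\cP$ with $w(\sigma)\le h(\cN)$; this gives $s(\cP)\le w(\sigma)\le h(\cN)=h_{\rm tc}(\cP)$, as required (when $|X|=1$ the inequality is trivial). I would build $\sigma$ by repeatedly applying Lemma~\ref{l:cherry}: while the current tree-child network has at least two leaves it contains a cherry or a reticulated cherry, and reducing either keeps it tree-child. A cherry reduction of $\{x,y\}$ deleting $x$ is recorded as the pair $(x,y)$; a reticulated-cherry reduction of $\{x,y\}$ with reticulation leaf $x$ (so the reticulation edge $(p_y,p_x)$ into the parent $p_x$ of $x$ is deleted) is also recorded as $(x,y)$; and when a single vertex $z$ remains I append $(z,-)$. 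The heart of the argument is the invariant that, writing $\cN_k$ for the network after $k$ reductions and $\cP_k$ for the family obtained from $\cP$ by picking the first $k$ recorded pairs, $\cN_k$ displays $\cP_k$. Granting this, the terminal network is a single vertex $z$, so every tree of $\cP_m$ is the single vertex $z$, and hence $\sigma$ really is a cherry-picking sequence of $\cP$.

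For the weight I would track two quantities along the reduction. Each cherry reduction deletes exactly one leaf and leaves $h$ unchanged (in a cherry both children are leaves, so only a tree vertex is suppressed), while each reticulated-cherry reduction deletes exactly one reticulation edge, lowering $h$ by one and fixing the leaf set; no other step removes a leaf or a reticulation edge. Reducing $\cN$ all the way to a single vertex therefore uses exactly $|X|-1$ cherry reductions and exactly $h(\cN)$ reticulated-cherry reductions, so $\sigma$ has $s+1=(|X|-1)+h(\cN)+1$ ordered pairs and $w(\sigma)=s+1-|X|=h(\cN)=h_{\rm tc}(\cP)$; together with Corollary~\ref{tccp2} this would in fact give equality.

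Verifying the display invariant is where the real work lies, and I expect the reticulated-cherry step to be the main obstacle. For a genuine cherry $\{x,y\}$ of $\cN_{k-1}$ the matter is easy: in any embedding the unique edges into the leaves $x$ and $y$ force a common parent that survives the contractions turning a binary refinement into the displayed tree, so $\{x,y\}$ is a cherry of \emph{every} tree displayed by $\cN_{k-1}$; picking $(x,y)$ then deletes $x$ uniformly, and $\cN_k=\cN_{k-1}|(\cL(\cN_{k-1})-\{x\})$ displays $\cP_k$. A reticulated-cherry reduction is delicate because deleting $(p_y,p_x)$ preserves the leaf set but can genuinely destroy the display of a tree $\cT$ with $x\in\cL(\cT)$, $y\notin\cL(\cT)$ that routes $x$ through $p_y$. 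When $\{x,y\}$ is a cherry of a displayed $\cT$ the removal is harmless, since an embedding of $\cT\ba x$ avoids $p_x$ altogether; when $\{x,y\}$ is not a cherry of $\cT$ but $x,y\in\cL(\cT)$, any embedding using $(p_y,p_x)$ would force $p_y$ to branch to both $x$ and $y$ and hence make $\{x,y\}$ a cherry of $\cT$, a contradiction, so some embedding avoids the edge. The remaining case $y\notin\cL(\cT)$ is exactly the one to rule out, and it cannot be handled by reducing an \emph{arbitrary} reticulated cherry; the reduction must be chosen to respect $\cP_k$. I would therefore select the reticulated cherry from a cherry of some tree of $\cP_k$ and use the tree-child structure of $\cN_k$ to show the chosen reduction preserves display of all of $\cP_k$.

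Finally I would check property~(P), that no $x_i$ recurs as a later second coordinate. A leaf removed by a cherry reduction never reappears, so only reticulation leaves are at issue. While the parent $p_x$ of a reticulation leaf $x$ is a reticulation it has out-degree one, so $x$ cannot be the second coordinate of any cherry or reticulated cherry; and once $p_x$ is suppressed, the tree-child condition, which forbids a tree vertex from having two reticulation children, lets me orient any genuine cherry containing $x$ so that $x$ remains a first coordinate. Together these observations should yield (P), completing the extraction of $\sigma$ and hence the inequality $h_{\rm tc}(\cP)\ge s(\cP)$.
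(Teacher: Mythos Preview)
Your high-level strategy---iteratively reduce a tree-child network $\cN$ with $h(\cN)=h_{\rm tc}(\cP)$, record each reduction as an ordered pair, and count cherry versus reticulated-cherry reductions to get $w(\sigma)=h(\cN)$---is exactly the paper's, and your weight bookkeeping is correct. The gap is in the two places you yourself flag as ``where the real work lies'': the display invariant and property~(P).

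The invariant ``$\cN_k$ displays $\cP_k$'' is not established by your sketch, and the proposed fix for the reticulated-cherry case does not close it. You write that you will ``select the reticulated cherry from a cherry of some tree of $\cP_k$'', but the obstruction is not the existence of \emph{one} good tree---it is \emph{every} tree. Even if $\{x,y\}$ is a cherry of one tree in $\cP_{k-1}$, another tree $\cT$ may have $x\in\cL(\cT)$ and $y\notin\cL(\cT)$ (because $y$ was earlier a reticulation leaf and was picked from $\cT$ but not from the network); deleting $(p_y,p_x)$ can then kill the only route to $x$ for that $\cT$. Worse, the same leakage hits genuine cherries: your rule for~(P) (``orient so the former first coordinate stays first'') forces you, at a cherry $\{a,b\}$ in $\cN_{k-1}$ with $a$ previously a first coordinate, to delete $a$ from the network; but a tree with $a\in\cL(\cT)$ and $b\notin\cL(\cT)$ is then no longer displayed. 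So the display invariant and your orientation rule for~(P) pull against each other, and nothing in the sketch reconciles them.

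The paper sidesteps both issues by \emph{not} maintaining your invariant. It first fixes, once and for all, pairwise-disjoint tree paths from the root and from each reticulation $v_j$ to distinguished leaves $\ell_\rho,\ell_1,\ldots,\ell_r$. Cherries are oriented using these labels (delete the non-distinguished leaf, except delete $\ell_j$ when $v_j$ is no longer a reticulation); a short structural argument from disjointness and tree-childness shows this is well defined and yields~(P) directly. For correctness of $\sigma$ on $\cP$ the paper argues by induction on the length of $\sigma$: after reducing a reticulated cherry $\{x_1,y_1\}$, the trees of $\cP$ no longer displayed by $\cN_1$ are exactly those with $\{x_1,y_1\}$ as a cherry, and in each such tree one \emph{re-attaches} $x_1$ elsewhere so that the modified tree is displayed by $\cN_1$; induction gives a cherry-picking sequence for the modified collection, and a final check shows that prepending $(x_1,y_1)$ yields one for $\cP$. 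This re-attachment device is the key technical idea missing from your outline.
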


\begin{proof}
Let $\cN$ be a tree-child network on $X$ that displays $\cP$. By Corollary~\ref{cor:tcdisplay2}, such a network exists.  We establish the lemma by explicitly constructing a cherry-picking sequence $\sigma$ for $\cP$ such that $w(\sigma)\le h(\cN)$.

Let $\rho$ denote the root of $\cN$, and let $v_1, v_2, \ldots, v_r$ denote the reticulations of $\cN$. Let $\ell_{\rho}, \ell_1, \ell_2, \ldots, \ell_r$ denote the leaves at the end of tree paths $P_{\rho}, P_1, P_2, \ldots, P_r$ in $\cN$ starting at $\rho, v_1, v_2, \ldots, v_r$, respectively. Observe that these paths are pairwise vertex disjoint. We now construct a sequence of ordered pairs as follows:\\
\begin{enumerate}[{\bf Step 1.}]
\item Set $\cN=\cN_0$ and $\sigma_0$ to be the empty sequence. Set $i=1$.

\item If $\cN_{i-1}$ consists of a single vertex $x_i$, then set $\sigma_i$ to be the concatenation of $\sigma_{i-1}$ and $(x_i, -)$, and return $\sigma_i$.

\item If $\{x_i, y_i\}$ is a cherry in $\cN_{i-1}$, then
\begin{enumerate}[{\bf (a)}]
\item If one of $x_i$ and $y_i$, say $x_i$, equates to $\ell_j$ for some $j\in \{1, 2, \ldots, r\}$ and $v_j$ is not a reticulation in $\cN_{i-1}$, then set $\sigma_i$ to be the concatenation of $\sigma_{i-1}$ and $(x_i, y_i)$.

\item Otherwise, set $\sigma_i$ to be the concatenation of $\sigma_{i-1}$ and $(x_i, y_i)$, where $x_i\not\in\ \{\ell_{\rho}, \ell_1, \ell_2, \ldots, \ell_r\}$.

\item Set $\cN_i$ to be the tree-child network obtained from $\cN_{i-1}$ by deleting $x_i$, thereby reducing the cherry $\{x_i, y_i\}$.

\item Increase $i$ by one and go to Step 2.
\end{enumerate}

\item Else, there is a reticulated cherry $\{x_i, y_i\}$ in $\cN_{i-1}$, where $x_i$ say is the reticulation leaf.
\begin{enumerate}[{\bf (a)}]
\item Set $\sigma_i$ to be the concatenation of $\sigma_{i-1}$ and $(x_i, y_i)$.

\item Set $\cN_i$ to be the tree-child network obtained from $\cN_{i-1}$ by reducing the reticulated cherry $\{x_i,y_i\}$.

\item Increase $i$ by one and go to Step 2.\\
\end{enumerate}
\end{enumerate}

First note that it is easily checked that the construction is well defined, that is, it returns a sequence of ordered pairs. Moreover, in each iteration $i$ of the above construction, it follows from Lemma~\ref{l:cherry} that $\cN_i$ is tree-child. We next show that, if $\{x_i,y_i\}$ is a cherry in $\cN_{i-1}$, and $x_i$ and $y_i$ equate to $\ell_j$ and $\ell_{j'}$, respectively, where $\ell_j$ and $\ell_{j'}$ are elements in $\{\ell_1,\ell_2,\ldots,\ell_r\}$, then exactly one of $v_j$ and $v_{j'}$ is a reticulation in $\cN_{i-1}$. To see this, if $v_j$ and $v_{j'}$ are both reticulations in $\cN_{i-1}$, then $P_j$ and $P_{j'}$ are not vertex disjoint in $\cN$; a contradiction. On the other hand, suppose neither $v_j$ and $v_{j'}$ are reticulations in $\cN_{i-1}$. Without loss of generality, we may assume $\{x_i, y_i\}$ is the first such cherry for which this holds. Since $\cN$ is tree-child, and therefore has no tree vertex that is the parent of two reticulations, there is an iteration $i'< i$, in which the cherry $(x_{i'}, y_{i'})$ is concatenated with $\sigma_{i'-1}$, where $y_{i'}\in \{x_i, y_i\}$, and $\cN_{i'}$ has $\{x_i, y_i\}$ as a cherry but $\cN_{i'-1}$ does not. If $x_{i'}=\ell_{\rho}$ or $x_{i'}\in \{\ell_1, \ell_2, \ldots, \ell_r\}$,
we contradict the construction by the choice of $\{x_i, y_i\}$. Also, if $x_{i'}\not\in \{\ell_{\rho}, \ell_1, \ell_2, \ldots, \ell_r\}$, then we again contradict the construction. Hence, we may assume for the remainder of the proof that exactly one of $v_j$ and $v_{j'}$ is a reticulation in $\cN_{i-1}$.

Let
$$\sigma=(x_1, y_1), (x_2, y_2), \ldots, (x_{i-1}, y_{i-1}), (x_i, -)$$
be the sequence returned by the construction. We prove by induction on $i$ that $\sigma$ is a cherry-picking sequence for $\cP$ whose weight is at most $h(\cN)$. If $i=1$, then $\cN$ consists of the single vertex in $X$ and the construction correctly returns such a sequence.

Now suppose that $i\ge 2$, and consider the first iteration of the construction. Either $\{x_1, y_1\}$ is a cherry or a reticulated cherry of $\cN_0$. If $\{x_1, y_1\}$ is a cherry, then $\{x_1, y_1\}$ is a cherry of each tree in $\cP$. In this instance, let $\cP'$ denote the set of phylogenetic $X'$-trees obtained from $\cP$ by picking $x_1$, where $X'=X-\{x_1\}$. Observe that $\cN_1$ is a tree-child network on $X'$ that displays $\cP'$.

Now assume that $\{x_1, y_1\}$ is a reticulated cherry with $x_1$ as the reticulation leaf. Let $\cP_1$ be the subset of trees in $\cP$ not displayed by $\cN_1$ and let $\cP_2=\cP-\cP_1$. Note that $\{x_1, y_1\}$ is a cherry of each tree in $\cP_1$. For each tree in $\cP_1$, delete the edge incident with $x_1$, suppress any resulting degree-two vertex, and reattach $x_1$ to the rest of the tree containing $y_1$ by subdividing an edge with a new vertex and adding an edge joining this vertex and $x_1$ so that the resulting  phylogenetic $X$-tree is displayed by $\cN_1$. It is easily seen that this is always possible. Let $\cP'_1$ denote the resulting collection of trees obtained from $\cP_1$. For this instance, let $\cP'=\cP'_1\cup \cP_2$ and observe that $\cN_1$ displays $\cP'$. 

To complete the induction it suffices to show that if
$$\sigma'=(x_2, y_2), (x_3, y_3), \ldots, (x_{i-1}, y_{i-1}), (x_i, -)$$
is a cherry-picking sequence for $\cP'$ whose weight $w(\sigma')$ is at most $h(\cN_1)$, then $\sigma$ is a cherry-picking sequence for $\cP$ whose weight $w(\sigma)$ is at most $h(\cN_0)$, that is, at most $h(\cN)$. First assume that $\{x_1, y_1\}$ is a cherry of $\cN_0$. Then, as $\sigma'$ satisfies (P) and $x_1\not\in \cL(\cN_1)$, it follows that $\sigma$ also satisfies (P) and so $\sigma$ is a cherry-picking sequence for $\cP$. Since $x_1$ only appears once as the first coordinate of an ordered pair in $\sigma$, we have
$$w(\sigma)=w(\sigma')\le h(\cN_1)=h(\cN_0).$$

Now assume that $\{x_1, y_1\}$ is a reticulated cherry of $\cN_0$ with $x_1$ as the reticulation leaf. Without loss of generality, let $v_1$ denote the associated reticulation, so that $x_1=\ell_1$. We next show that $\sigma$ satisfies (P). If the in-degree of $v_1$ is at least three in $\cN_0$, then $v_1$ exists in $\cN_1$ and so, by construction, $x_1$ does not appear as the second coordinate of an ordered pair in $\sigma'$ as well as in $\sigma$. Therefore, if the in-degree of $v_1$ is at least three in $\cN_0$, then $\sigma$ satisfies (P).

Now suppose that  the in-degree of $v_1$ is two in $\cN_0$. To establish that $\sigma$ satisfies (P), assume to the contrary that $x_1$ appears as the second coordinate of an ordered pair in $\sigma'$. Let $(z, x_1)$ denote the first such ordered pair. Then, at some iteration $j$, either $\{z, x_1\}$ is a cherry or a reticulated cherry of $\cN_{j-1}$. If $\{z, x_1\}$ is a cherry of $\cN_{j-1}$, then, since $x_1=\ell_1$, we are in Step 3(a) in iteration $j$ of the construction and so the ordered pair should be $(x_1, z)$; a contradiction. On the other hand, if $\{z, x_1\}$ is a reticulated cherry of $\cN_{j-1}$, then $z$ is the reticulation leaf of $\{z, x_1\}$ and, by construction of $\sigma$, one of the parents of $v_1$ in $\cN_0$ is the parent of two reticulations in $\cN_0$, namely $v_1$ and the reticulation for which, by construction, there is a tree path starting at this reticulation and ending at $z$; a contradiction as $\cN_0$ is tree-child. Hence $\sigma$ satisfies (P). Thus, since each tree in $\cP_1$ has $\{x_1, y_1\}$ as a cherry and $\sigma'$ is a cherry-picking sequence for $\cP'$, it follows that $\sigma$ is a cherry-picking sequence for $\cP$. Furthermore, as $w(\sigma)=w(\sigma')+1$ and $h(\cN_0)=h(\cN_1)+1$,
$$w(\sigma)=w(\sigma')+1\le h(\cN_1)+1=h(\cN_0).$$
This completes the proof of the lemma.\qed
\end{proof}

\noindent {\sc Proof of Theorem~\ref{picking1}.}
Combining Corollary~\ref{tccp2} and Lemma~\ref{tccp3} establishes the theorem.\qed\\

We next establish Theorem~\ref{picking2}.\\

\noindent{\sc Proof of Theorem~\ref{picking2}.}
We first show that $h(\cP)\le s_+(\cP)$. Let $\cP'$ be a set of phylogenetic trees obtained from $\cP$ by attaching a set $Z$ such that $X\cap Z$ is empty and $s_+(\cP)=s(\cP')$. It follows by Theorem~\ref{picking1} that there is a tree-child network $\cN'$ on $X\cup Z$ that displays $\cP'$ with $h(\cN')=s(\cP')$. Observe that $\cN'$ displays $\cP$. Let $\cN$ be the phylogenetic network on $X$ obtained from $\cN'$ by deleting every vertex that is not on a directed path from the root to a leaf in $X$, and suppressing any resulting non-root vertex of degree two. Noting that no deleted vertex is used to display a  phylogenetic tree in $\cP$, it is easily checked that, up to the root having out-degree one, $\cN$ displays $\cP$. Furthermore, $h(\cN)\le h(\cN')$. Therefore, by Theorem~\ref{picking1},
$$h(\cP)\le h(\cN)\le h(\cN')=s(\cP')=s_+(\cP).$$
In particular, $h(\cP)\le s_+(\cP)$.

To prove the converse, $h(\cP)\ge s_+(\cP)$, let $\cN$ be a phylogenetic network on $X$ that displays $\cP$ and $h(\cN)=h(\cP)$. Let $\cN'$ be the phylogenetic network obtained by attaching a new leaf to each reticulation edge in $\cN$, i.e. for each reticulation edge $e$, subdivide $e$ with a new vertex $u$ and add a new edge $(u, z_e)$, where $z_e\notin X$. It is easily checked that $\cN'$ is tree-child and $h(\cN')=h(\cN)$. Let $Z$ denote the set of new leaves attached to $\cN$. For each tree $\cT$ in $\cP$, let $\cT_r$ denote a binary refinement of $\cT$ that is displayed by $\cN$, and let $\cT_r'$ be a binary phylogenetic tree with leaf set $X\cup Z$ that is displayed by $\cN'$ and obtained from $\cT_r$ by attaching $Z$. Note that $\cT'_r$ is a binary refinement of a tree that can be obtained from $\cT$ by attaching $Z$. Set
$$\cP_r'=\{\cT_r': \cT\in \cP\},$$
and note that $h_{\rm tc}(\cP_r')\le h(\cN')$ as $\cN'$ is tree-child and displays $\cP_r'$. Since each tree in $\cP_r'$ is a binary refinement of a tree that can be obtained from a tree in $\cP$ by attaching $Z$, we have $s_+(\cP)\le s(\cP_r')$. Thus, by Theorem~\ref{picking1},
$$s_+(\cP)\le s(\cP_r')=h_{\rm tc}(\cP_r')\le h(\cN')=h(\cN)=h(\cP),$$
and so $s_+(\cP)\le h(\cP)$. \qed\\

We end this section with the pseudocode of an algorithm---called {\sc Construct Tree-Child Network}---that constructs a tree-child network from a cherry-picking sequence. Specifically, given a cherry-picking sequence $\sigma$ for a set $\cP$ of phylogenetic $X$-trees, {\sc Construct Tree-Child Network} returns a tree-child network $\cN$ on $X$ that displays $\cP$ and $h(\cN)\leq w(\sigma)$. This is the same construction as that used to prove Lemma~\ref{tccp1} and so the proof of its correctness is not given.\\

\noindent{\bf Algorithm.} {\sc Construct Tree-Child Network} \\ 
\noindent{\bf Input.} A set $\cP$ of  phylogenetic $X$-trees, and a cherry-picking sequence $$\sigma=(x_1, y_1), (x_2, y_2), \ldots, (x_s, y_s), (x_{s+1}, -)$$
for $\cP$.\\
\noindent{\bf Output.} A tree-child network $\cN$ on $X$ that displays $\cP$ and $h(\cN)\leq w(\sigma)$.
\begin{enumerate}[{\bf Step 1.}]
\item If $|X|=1$, set $\cN_{s+1}$ to be the phylogenetic network consisting of the single vertex $x_{s+1}$ in $X$ and return $\cN_{s+1}$. Otherwise, set  $\cN_{s+1}$ to be the phylogenetic network consisting of the single edge $(\rho,x_{s+1})$ and set $i=s$.

\item \label{step} Depending on which holds, do exactly one of the following three steps.
\begin{enumerate}[{\bf (a)}]

\item If $x_i\in\cL(\cN_{i+1})$ and the parent $p_i$ of $x_i$ is a reticulation in $\cN_{i+1}$, then obtain $\cN_i$ from $\cN_{i+1}$ by subdividing the edge directed into $y_i$ with a new vertex $u$ and adding a new edge $(u,p_i)$.

\item If $x_i\in\cL(\cN_{i+1})$ and the parent $p_i$ of $x_i$ is not a reticulation in $\cN_{i+1}$, then obtain $\cN_i$ from $\cN_{i+1}$ by subdividing the edge directed into $y_i$ with a new vertex $u$, subdividing the edge $(p_i,x_i)$ with a new vertex $v$, and adding a new edge $(u,v)$.

\item Else $x_i\notin\cL(\cN_{i+1})$, and obtain $\cN_i$ from $\cN_{i+1}$ by subdividing the edge directed into $y_i$ with a new vertex $u$ and adding a new edge $(u,x_i)$.

\end{enumerate}

\item If $i=1$, then set $\cN$ to be the network obtained from $\cN_i$ by deleting the unique edge incident with $\rho$ and return $\cN$. Otherwise, decrement $i$ by one and go to Step~\ref{step}.\\
\end{enumerate}


Now, let $\sigma$ be a leaf-added cherry-picking sequence for a set $\cP$ of  phylogenetic $X$-trees. Then there exists a set $\cP'$ of  phylogenetic trees on $X\cup Z$ obtained from $\cP$ by attaching $Z$ such that $\sigma$ is a cherry-picking sequence for $\cP'$. It is straightforward to check that the network $\cN$ on $X$ resulting from calling {\sc Construct Tree-Child Network} for $\cP'$ and $\sigma$ and, subsequently, restricting to vertices and edges on a path from the root to leaves in $X$ as described in the first direction of the proof of Theorem~\ref{picking2} displays $\cP$ and $h(\cN)\leq w(\sigma)$.

\section{Bounding the maximum number of auxiliary leaves}\label{sec:bound}

In light of Theorem~\ref{picking2}, a natural question to ask is how many auxiliary leaves need to be attached to a given set $\cP$ of  phylogenetic $X$-trees in order to calculate $h(\cP)$. Attaching auxiliary leaves to $\cP$ is necessary whenever $h(\cP)<h_{\rm tc}(\cP)$. Here, we provide an upper bound on the number of auxiliary leaves in terms of $h_{\rm tc}(\cP)$. We start by introducing two operations that, repeatedly applied, transform any phylogenetic network $\cN$ that displays $\cP$ into a tree-child network without increasing $h(\cN)$ and that displays a set of phylogenetic trees obtained from $\cP$ by attaching auxiliary leaves.

Let $\cN$ be a phylogenetic network on $X$, and let $(u,v)$ be an edge in $\cN$ such that $u$ and $v$ are reticulations. Obtain a phylogenetic network $\cN'$ from $\cN$ by contracting $(u,v)$ and, for each resulting pair of parallel edges, repeatedly deleting one of the two edges in parallel and suppressing the resulting degree-two vertex. We say that $\cN'$ has been obtained from $\cN$ by a {\it contraction}. 

\begin{lemma}\label{l:stack}
Let $\cP$ be a collection of phylogenetic $X$-trees, and let $\cN$ be a phylogenetic network on $X$ that displays $\cP$. Let $\cN'$ be a phylogenetic network obtained from $\cN$ by a contraction. Then $\cN'$ displays $\cP$ and $h(\cN')\le h(\cN)$.
\end{lemma}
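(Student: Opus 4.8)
The plan is to verify separately that $\cN'$ is a phylogenetic network displaying $\cP$ and that $h(\cN')\le h(\cN)$, treating the contraction of $(u,v)$ and the subsequent deletions of parallel edges as distinct bookkeeping steps. First I would record some structural facts. Since $u$ is a reticulation, its unique out-edge is $(u,v)$, so the only directed path from $u$ to $v$ is this edge; hence contracting $(u,v)$ cannot create a directed cycle and $\cN'$ remains acyclic. Writing $w$ for the vertex obtained by identifying $u$ and $v$, the out-edges of $w$ are exactly the out-edge of $v$, so $w$ has out-degree one and is a reticulation. The only way parallel edges can arise is when some vertex $p$ is a parent of both $u$ and $v$ in $\cN$, in which case the two edges $(p,u)$ and $(p,v)$ both become edges $(p,w)$.

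For the hybridisation number I would argue directly from the defining in-degree sum. Contracting $(u,v)$ removes the two terms $d^-(u)-1$ and $d^-(v)-1$ and introduces the single term $d^-(w)-1$, where $d^-(w)=d^-(u)+d^-(v)-1$; a routine cancellation shows this step contributes no net change to $h$. Each deletion of one of a pair of parallel edges lowers a single in-degree by one and so decreases $h$ by exactly one, while suppressing a resulting degree-two vertex removes a vertex of in-degree one, which contributes $0$ to the sum and leaves every other in-degree unchanged. Hence $h(\cN')\le h(\cN)$, with equality precisely when $u$ and $v$ have no common parent in $\cN$.

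The substance of the proof is showing that $\cN'$ still displays each $\cT\in\cP$. Fixing an embedding $\cE$ of a binary refinement of $\cT$ in $\cN$, I would push it forward along the contraction map that identifies $u$ and $v$ with $w$. The crucial point is that, because the only out-edge of $u$ is $(u,v)$ and no non-leaf vertex of an embedding has out-degree zero, whenever $\cE$ uses $u$ it must also use $(u,v)$ and $v$; thus $\cE$ meets $u$ and $v$ only as a pass-through segment $p\to u\to v\to c$, which maps to the legitimate segment $p\to w\to c$ in $\cN'$. Moreover, at a common parent $p$ of $u$ and $v$, at most one of $(p,u)$ and $(p,v)$ can lie in $\cE$ (otherwise $v$ would acquire in-degree two in $\cE$), so discarding the other parallel edge does not disturb $\cE$; and any vertex suppressed during the clean-up is either unused by $\cE$ or traversed as a degree-two pass-through, so suppression merely splices two edges of $\cE$ into one. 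Consequently the image of $\cE$ is an embedding of $\cT$ in $\cN'$, which proves that $\cN'$ displays $\cP$.

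I expect the main obstacle to be exactly this last step: carefully checking that the embedding survives the deletion of parallel edges and the suppression of degree-two vertices, rather than the hybridisation-number computation, which follows immediately from the in-degree formula. Some additional care is needed to confirm that $\cN'$ remains a valid phylogenetic network in the degenerate case where repeated parallel-edge deletions drive $d^-(w)$ down to one, forcing $w$ itself to be suppressed.
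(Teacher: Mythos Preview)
Your proposal is correct and follows essentially the same approach as the paper: both argue the in-degree inequality $d^-(w)\le d^-(u)+d^-(v)-1$ directly, and both show displayability by tracking how an embedding of $\cT$ in $\cN$ passes through $u$ and $v$ (the paper phrases this as a short case analysis on whether $v$, and then $u$, is used; you phrase it as a push-forward along the contraction map, which amounts to the same thing).

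One remark: your flagged ``degenerate case'' where $d^-(w)$ might drop to $1$ cannot actually occur. If $c$ denotes the number of common parents of $u$ and $v$ in $\cN$, then $c\le d^-(v)-1$ since $u$ itself is one of $v$'s parents but not a parent of itself; hence after contraction and deletions $d^-(w)=d^-(u)+d^-(v)-1-c\ge d^-(u)\ge 2$, so $w$ remains a reticulation and is never suppressed. This also confirms that $\cN'$ satisfies the degree constraints of a phylogenetic network without further special-case checking.
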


\begin{proof}
Let $(u,v)$ be the edge in $\cN$ that is incident with two reticulations and contracted in the process of obtaining $\cN'$ from $\cN$. Furthermore, let $w$ be the vertex in $\cN'$ that results from identifying $u$ and $v$. We have $d^-(w)\leq d^-(u)+d^-(v)-1$ while all other reticulations $w'$ in $\cN'$ correspond to a reticulation $u'$ in $\cN$ and $d^-(w')=d^-(u')$. It now follows that $h(\cN')\le h(\cN)$. 

Now, let $\cT$ be a tree in $\cP$. If $v$ is not used to display $\cT$ in $\cN$, then $u$ is also not used to display $\cT$ in $\cN$, and it is easily seen that $\cN'$ displays $\cT$ without using $w$. On the other hand, if $v$ is used to display $\cT$ in $\cN$, then exactly one parent, $t$ say, of $v$ is used to display $\cT$ in $\cN$. If $t\neq u$, it is clear that $\cN'$ displays $\cP$. {Furthermore, if $t=u$, then exactly one parent of $u$, say $s$, is used to display $\cT$ in $\cN$. Now, regardless of whether or not $s$ is also a parent of $v$ in which case $s$ is suppressed in obtaining $\cN'$ from $\cN$, it again follows that $\cN'$ displays $\cT$.}
Hence $\cN'$ displays each tree in $\cP$ and the lemma follows.\qed
\end{proof}

We call a phylogenetic network with no edges whose end vertices are both reticulations {\it stack free}. It follows from repeated applications of Lemma~\ref{l:stack} that if $\cP$ is a collection of phylogenetic $X$-trees, then there is a stack-free network $\cN$ on $X$ that displays $\cP$ such that $h(\cN)=h(\cP)$.

For the second operation, let $\cN$ be a phylogenetic network on $X$, and let $u$ be a tree vertex in $\cN$ whose two children are both reticulations. Furthermore, let $(u,v)$ be a reticulation edge, and let $z\notin X$. Obtain a phylogenetic network $\cN'$ on $X\cup\{z\}$ from $\cN$ by subdividing the edge $(u,v)$ with a new vertex $w$ and adding a new edge $(w,z)$. We say that $\cN'$ has been obtained from $\cN$ by a {\it leaf-attaching} operation. 
\noindent Figure~\ref{fig:bounding-Z} illustrates (a) a contraction and (b) a leaf-attaching operation.

\begin{figure}[t]
\center
\scalebox{1.3}{\input{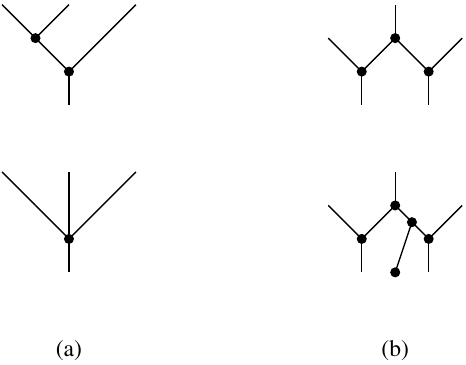_t}}
\caption{The phylogenetic networks at the bottom are obtained from their respective networks at the top by (a) a contraction and (b) a leaf-attaching operation.}
\label{fig:bounding-Z}
\end{figure}

We are now in a position to establish the main result of this section.

\begin{theorem}\label{t:bound}
Let $\cP$ be a collection of phylogenetic $X$-trees. There exists a set $Z$ of auxiliary leaves  with the following two properties.
\begin{itemize}
\item [{\rm (i)}] $ |Z|\leq h_{\rm tc}(\cP)$, and 
\item [{\rm (ii)}] there is a collection $\cP_Z$ of phylogenetic trees with leaf set $X\cup Z$  obtained from $\cP$ by attaching $Z$ such that $h(\cP)=h_{\rm tc}(\cP_Z)$.
\end{itemize}
\end{theorem}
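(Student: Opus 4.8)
The plan is to start from an optimal network for $h(\cP)$ and surgically repair its tree-child violations, one auxiliary leaf at a time, using exactly the two operations introduced above. First I would fix a phylogenetic network $\cN$ on $X$ that displays $\cP$ with $h(\cN)=h(\cP)$, and then, by repeated contractions (the remark following Lemma~\ref{l:stack}), assume $\cN$ is stack-free; since contractions neither increase $h$ nor destroy the displaying of $\cP$, we still have $h(\cN)=h(\cP)$. The key structural observation is that in a stack-free network the only obstruction to being tree-child is a tree vertex both of whose children are reticulations: any reticulation has its unique child a tree vertex or leaf by stack-freeness, and any other tree vertex already has a tree-vertex or leaf child. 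Call such an offending tree vertex \emph{bad}.

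Next I would eliminate the bad vertices by applying the leaf-attaching operation to each in turn. If $u$ is bad with reticulation children $v,v'$, then subdividing $(u,v)$ by $w$ and appending a leaf $z$ makes $w$ a tree vertex, so $u$ becomes the parent of the tree vertex $w$ and is no longer bad; the operation adds only a tree vertex and a leaf, so $h$ is unchanged, it creates no reticulation-to-reticulation edge (stack-freeness is preserved), and it alters no other vertex's children (so no new bad vertex is created and the number of bad vertices strictly drops). The one point requiring care is that the enlarged network still displays the trees of $\cP$ with $z$ attached: for a tree $\cT\in\cP$ with a chosen embedding in $\cN$, I would trace the three cases according to whether the embedding routes through $(w,v)$, through $(u,v')$, or through neither, and check in each case that retaining the tree edges $(u,w)$ and $(w,z)$ yields an embedding whose deletion of $z$ recovers the original embedding of $\cT$; hence $z$ is attached somewhere and $\cN'$ displays a tree obtained from $\cT$ by attaching $z$. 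Iterating over all bad vertices produces a stack-free network with no bad vertices, hence a tree-child network $\cN_Z$ on $X\cup Z$ that displays a set $\cP_Z$ obtained from $\cP$ by attaching $Z$, with $h(\cN_Z)=h(\cP)$; here $|Z|$ equals the number $b$ of bad vertices of the stack-free $\cN$.

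To obtain the bound (i), I would count the edges into reticulations in the stack-free network. As no reticulation has a reticulation parent, every edge into a reticulation leaves a tree vertex, so the number of such edges is $\sum_v d^-(v)$, summed over the reticulations $v$. Each bad vertex contributes two distinct such edges (there are no parallel edges), giving $2b\le\sum_v d^-(v)$. Writing $r$ for the number of reticulations, $\sum_v d^-(v)=h(\cN)+r$, and since each reticulation contributes at least $1$ to $h(\cN)$ we have $r\le h(\cN)$; therefore $2b\le h(\cN)+r\le 2h(\cN)$, whence $|Z|=b\le h(\cN)=h(\cP)\le h_{\rm tc}(\cP)$, which is (i).

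Finally, for (ii), the construction already gives a tree-child network $\cN_Z$ displaying $\cP_Z$ with $h(\cN_Z)=h(\cP)$, so $h_{\rm tc}(\cP_Z)\le h(\cP)$. For the reverse inequality I would reuse the restriction argument from the first half of the proof of Theorem~\ref{picking2}: a tree-child network realising $h_{\rm tc}(\cP_Z)$ displays $\cP_Z$, hence displays each tree of $\cP$ after restricting to $X$, and deleting the leaves of $Z$ together with the resulting clean-up does not increase the hybridisation number, so $h(\cP)\le h_{\rm tc}(\cP_Z)$. Combining the two inequalities yields $h(\cP)=h_{\rm tc}(\cP_Z)$. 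I expect the displaying verification for the leaf-attaching operation to be the main obstacle, as it is the only step where one must argue that an embedding survives the surgery; the remainder is bookkeeping together with the short edge count.
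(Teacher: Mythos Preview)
Your proposal is correct and follows essentially the same route as the paper: pass to a stack-free optimum network via Lemma~\ref{l:stack}, repair each bad tree vertex with a single leaf-attaching operation to obtain a tree-child $\cN_Z$ with the same hybridisation number, and then run the edge count $2b\le |E_r|=h(\cN)+|V_r|\le 2h(\cN)$ to get $|Z|\le h(\cP)\le h_{\rm tc}(\cP)$. Your treatment is slightly more explicit than the paper's in two places---you argue that leaf-attaching creates no new bad vertices, and you sketch the case analysis for why $\cN_Z$ displays some $\cP_Z$ obtained from $\cP$ by attaching $Z$---whereas the paper simply asserts ``since $\cN$ displays $\cP$, such a set $\cP_Z$ always exists''; for the reverse inequality in~(ii) the paper just notes that each tree in $\cP$ is a restriction of one in $\cP_Z$, which is the same content as your restriction argument.
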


\begin{proof}
Let $\cN$ be a stack-free network on $X$ that displays $\cP$ with $h(\cN)=h(\cP)$. By Lemma~\ref{l:stack}, $\cN$ exists. Now obtain a phylogenetic network $\cN_Z$ from $\cN$ by a minimum number of
repeated applications of the leaf-attaching operation until each tree vertex in the resulting network has at least one child that is a tree vertex or a leaf. Clearly, $h(\cN)=h(\cN_Z)$. Moreover, since no leaf-attaching operation results in a new edge in $\cN_Z$ that is incident with two reticulations, $\cN_Z$ is stack free. It now follows that $\cN_Z$ is tree-child. 
Let $Z=\cL(\cN_Z)-X$. Then, by construction, the size of $Z$ is equal to the number of tree vertices in $\cN$ whose two children are both reticulations. 
Let $\cP_Z$ be a set of phylogenetic trees obtained from $\cP$ by attaching $Z$ to $\cP$ such that $\cN_Z$ displays $\cP_Z$. Since $\cN$ displays $\cP$, such a set $\cP_Z$ always exists. By construction, $h(\cP)\geq h_{\rm tc}(\cP_Z)$. Moreover, as each tree in $\cP$ is a restriction of a tree in $\cP_Z$, it follows that $h(\cP)\leq h_{\rm tc}(\cP_Z)$; thereby establishing part (ii) of the theorem.

Using the construction of the previous paragraph, we now establish part (i) of the theorem. Let $E_r$ be the set of reticulation edges in $\cN$, and let $V_t$ be the set of tree vertices of $\cN$ whose children are both reticulations. Recall that $|V_t|=|Z|$. We next make two observations. First, each vertex in $V_t$ is incident with two edges in $E_r$. Second, each edge in $E_r$ is incident with at most one vertex in $V_t$. In summary, this implies that $|Z|\leq \frac 1 2 |E_r|$. Furthermore, we have $|E_r|=h(\cP)+|V_r|$, where $V_r$ is the set of reticulations in $\cN$. Therefore, as $|V_r|\le h(\cP)$, we have $|E_r|\le 2h(\cP)$. As $h(\cP)\leq h_{\rm tc}(\cP)$, it now follows that 
$$|Z|\leq {\textstyle \frac 1 2} |E_r|\leq {\textstyle \frac 1 2} \cdot 2 h(\cP)\leq {\textstyle \frac 1 2} \cdot 2 h_{\rm tc}(\cP)=h_{\rm tc}(\cP).$$
This establishes part (i) of the theorem.\qed
\end{proof}

\section{Scoring an optimum forest}\label{sec:scoring}


For a collection $\cP$ of binary phylogenetic $X$-trees, acyclic-agreement forests characterise $h(\cP)$ for when $\cP$ consists of exactly two trees. Indeed, many algorithms and theoretical results that deal with {\sc Minimum Hybridisation} for two trees are deeply-anchored in the notion of acyclic-agreement forests~\cite{albrecht12,baroni05,bordewich07b,kelk12}. In this section, we establish a particular hardness result that contributes to an explanation of why acyclic-agreement forests appear, however, to be of little use to solve {\sc Minimum Hybridisation} for more than two trees. This result is a particular instance of a conjecture in~\cite[page 1626]{iersel16}.

For the purpose of the upcoming definitions, we regard the root of a binary phylogenetic $X$-tree $\cT$ as a vertex labelled $\rho$ at the end of a pendant edge adjoined to the original root. Furthermore, we view $\rho$ as an element of the leaf set of $\cT$; thus $\cL(\cT)=X\cup\{\rho\}$. Let $\cT$ and $\cT'$ be two binary phylogenetic $X$-trees.  An {\it agreement forest} $\cF=\{\cL_\rho,\cL_1,\cL_2,\ldots,\cL_k\}$ for $\cT$ and $\cT'$ is a partition of $X\cup \{\rho\}$ such that $\rho\in\cL_\rho$ and the following conditions are satisfied:
\begin{itemize}
\item[(i)] For all $i\in\{\rho,1,2,\ldots,k\}$, we have $\cT|\cL_i\cong\cT'|\cL_i$.
\item[(ii)] The trees in
$$\{\cT(\cL_i): i\in\{\rho,1,2,\ldots,k\}\}$$
and
$$\{\cT'(\cL_i): i\in\{\rho,1,2,\ldots,k\}\}$$
are vertex-disjoint subtrees of $\cT$ and $\cT'$, respectively.
\end{itemize}

\noindent Now, let $\cF=\{\cL_\rho,\cL_1,\cL_2,\ldots,\cL_k\}$ be an agreement forest for $\cT$ and $\cT'$. Let $G_{\cF}$ be the directed graph that has vertex set $\cF$ and an arc from $\cL_i$ to $\cL_{j}$ precisely if $i\neq j$ and

\begin{itemize}
\item[(iii)]  the root of $\cT(\cL_i)$ is an ancestor of the root of $\cT(\cL_{j})$ in $\cT$, or the root of $\cT'(\cL_i)$ is an ancestor of the root of $\cT'(\cL_{j})$ in $\cT'$. 
\end{itemize}
\noindent We call $\cF$ an {\it acyclic-agreement forest} for $\cT$ and $\cT'$ if $G_{\cF}$ has no directed cycle. Moreover, if $\cF$ contains the smallest number of elements over all acyclic-agreement forests for $\cT$ and $\cT'$, we say that $\cF$ is a {\it maximum acyclic-agreement forest} for $\cT$ and $\cT'$, in which case, we denote this number minus one by $m_a(\cT, \cT')$. 

Baroni et al.~\cite{baroni05} established the following characterisation for when a collection of binary phylogenetic $X$-trees contains exactly two trees.

\begin{theorem}
Let $\cP=\{\cT,\cT'\}$ be a collection of two binary phylogenetic $X$-trees. Then $h(\cP)=m_a(\cT,\cT')$.
\end{theorem}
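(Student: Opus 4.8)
The plan is to prove the two inequalities $h(\cP) \le m_a(\cT, \cT')$ and $h(\cP) \ge m_a(\cT, \cT')$ separately, in each case exploiting the dictionary between acyclic-agreement forests for $\{\cT, \cT'\}$ and binary phylogenetic networks displaying both trees. Throughout I would work with binary networks: by the Remark following the definition of $h(\cN)$, a tree vertex of higher out-degree can be refined without changing $h$, and a reticulation of in-degree $d$ contributes $d-1$ to $h$ and can be split into $d-1$ in-degree-two reticulations, so I may assume every reticulation has in-degree two and hence that $h(\cN)$ equals the number of reticulations of $\cN$.

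For the upper bound $h(\cP) \le m_a(\cT, \cT')$, I would start from a maximum acyclic-agreement forest $\cF = \{\cL_\rho, \cL_1, \ldots, \cL_k\}$, so $m_a(\cT, \cT') = k$, and assemble a network from its $k+1$ common subtrees. Because $G_\cF$ is acyclic I can linearly order the components consistently with every arc of $G_\cF$. Building the network in this order, I attach each subtree $\cT(\cL_i)$ ($i \ne \rho$) by creating a new reticulation at its root and adding two incoming edges: one at the location of the partially built network that records where this block sits in $\cT$, and one recording where it sits in $\cT'$. Each of the $k$ non-root blocks introduces exactly one reticulation, so the resulting $\cN$ satisfies $h(\cN)=k$; the two incoming edges per block let $\cN$ display both $\cT$ and $\cT'$, and the chosen ordering guarantees $\cN$ is acyclic. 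Hence $h(\cP) \le k = m_a(\cT,\cT')$.

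For the lower bound $h(\cP) \ge m_a(\cT, \cT')$, I would take a network $\cN$ with $h(\cN)=h(\cP)$ that displays $\cP$ and fix embeddings of $\cT$ and $\cT'$ in $\cN$. At each reticulation the embedding of $\cT$ uses at most one incoming edge, and likewise for $\cT'$. I would cut, for each reticulation, the incoming edges that break the overlap of the two embeddings, and read off the resulting blocks as a candidate partition $\cF$ of $X \cup \{\rho\}$. Three things must be verified: (i) each block restricts to isomorphic subtrees of $\cT$ and $\cT'$ and the blocks are vertex-disjoint in both trees, so that $\cF$ is an agreement forest; (ii) the number of blocks exceeds the number of reticulations by at most one, giving $|\cF|-1 \le h(\cN)$; and (iii) $G_\cF$ is acyclic. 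Point (iii) is the crux: an arc $\cL_i \to \cL_j$ of $G_\cF$ reflects an ancestor relation in $\cT$ or in $\cT'$, which lifts to a directed path in $\cN$ from the block containing the root of $\cT(\cL_i)$ to the block containing the root of $\cT(\cL_j)$, so a directed cycle in $G_\cF$ would produce a directed cycle in $\cN$, contradicting acyclicity of $\cN$. Combining (i)--(iii) gives $m_a(\cT,\cT') \le |\cF|-1 \le h(\cN)=h(\cP)$.

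The main obstacle is controlling the acyclicity condition in both directions: engineering the block ordering in the upper bound so that the assembled network is a valid (acyclic) phylogenetic network, and, more delicately, proving in the lower bound that cutting edges at reticulations yields a forest whose inheritance graph $G_\cF$ is acyclic while simultaneously keeping the block count within $h(\cN)+1$. I would note that in the present framework there is an alternative route: for two binary trees, Observation~\ref{ob:tc} together with the trivial inequality $h(\cP)\le h_{\rm tc}(\cP)$ gives $h(\cP)=h_{\rm tc}(\cP)$, and Theorem~\ref{picking1} gives $h_{\rm tc}(\cP)=s(\cP)$, so the theorem reduces to the purely combinatorial identity $s(\cP)=m_a(\cT,\cT')$. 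Establishing that identity would amount to exhibiting a weight-preserving correspondence between minimum cherry-picking sequences and maximum acyclic-agreement forests, and the acyclicity constraint resurfaces there as property (P) on the sequence.
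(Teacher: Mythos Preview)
The paper does not prove this theorem at all: it is quoted as a known result of Baroni, Gr\"unewald, Moulton and Semple~\cite{baroni05} and no argument is given here. So there is no ``paper's own proof'' to compare against; your sketch is essentially a recollection of the original two-direction argument from~\cite{baroni05}, and as such it is headed in the right direction (build $\cN$ from a topological ordering of the blocks of an acyclic-agreement forest for the upper bound; decompose an optimal $\cN$ along its reticulations for the lower bound and argue that the inheritance graph of the resulting partition is acyclic because $\cN$ is).

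One remark on your alternative route. Within this paper, invoking Observation~\ref{ob:tc} together with Theorem~\ref{picking1} does give $h(\cP)=h_{\rm tc}(\cP)=s(\cP)$ for two binary trees, but that only reduces the statement to $s(\cP)=m_a(\cT,\cT')$, which is not established anywhere in the paper either. Moreover, Observation~\ref{ob:tc} is itself obtained ``by slightly modifying the proof of~\cite[Theorem~2]{baroni05}'', i.e.\ from the very source of the theorem you are trying to prove, so this detour is not truly independent. If you want a self-contained argument you should stay with your first plan; the delicate step, as you correctly identify, is in the lower bound, where one must simultaneously keep the number of blocks at most $h(\cN)+1$ and guarantee that the inheritance graph $G_\cF$ is acyclic.
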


Let $\cN$ be a phylogenetic network on $X$ with root $\rho$ that displays a set $\cP$ of binary phylogenetic $X$-trees. As above, we regard $\rho$ as a vertex at the end of a pendant edge adjoined to the original root. We obtain a forest from $\cN$ by deleting all reticulation edges, repeatedly contracting edges where one end-vertex has degree one and is not in $X\cup \{\rho\}$, deleting isolated vertices not in $X\cup \{\rho\}$ and, lastly, suppressing all vertices with in-degree one and out-degree one. Let $\{\cS_\rho,\cS_1,\cS_2,\ldots,\cS_k\}$ be the forest obtained from $\cN$ in this way. We say that $\cF=\{\cL(\cS_\rho),\cL(\cS_1),\cL(\cS_2),\ldots,\cL(\cS_k)\}$ is the forest {\it induced} by $\cN$. Moreover, $\cF$ is said to be {\it optimum} if $\cN$ is a tree-child network with $h_{\rm tc}(\cP)=h(\cN)$. For example, up to regarding $\rho$ as a new vertex that is adjoined to the original root of the two phylogenetic networks $\cN$ and $\cN'$ shown in Figure~\ref{fig:cps}, $$\cF=\{\{\rho,1,2,6\},\{3\},\{4\},\{5\}\} \textnormal{ and } \cF'=\{\{\rho,1,3,5,6\},\{2\},\{4\}\}$$ is the induced forest of $\cN$ and $\cN'$, respectively.

In~\cite{iersel16}, the authors investigate {\sc Minimum Hybridisation} for three trees and conjecture that, given a set $\cP$ of three binary phylogenetic $X$-trees and the induced forest of a phylogenetic network $\cN$ that displays $\cP$ and $h(\cP)=h(\cN)$, it is NP-hard to determine $\cN$. For $|\cP|\geq 3$, we affirmatively answer their conjecture in the context of tree-child networks.
More precisely, using cherry-picking sequences, we show that the following decision problem is NP-complete. \\

\noindent{\sc Scoring Optimum Forest} \\
\noindent {\bf Instance.} A non-negative integer $k$, a collection $\cP$ of binary phylogenetic $X$-trees, an optimum forest $\cF$ induced by a tree-child network $\cN$ on $X$ that displays $\cP$. \\
\noindent {\bf Question.} Is $h_{\rm tc}(\cP)\le k$?\\

If $|\cP|=2$, then, by Observation~\ref{ob:tc}, $h_{\rm tc}(\cP)=h(\cP)$ and  $\cF$ is a maximum acyclic-agreement forest with $h_{\rm tc}(\cP)=|\cF|-1$. Hence, {\sc Scoring Optimum Forest} is polynomial time when $|\cP|=2$. However, the general problem is NP-complete.


\begin{theorem}
The problem {\sc Scoring Optimum Forest} is {\rm NP}-\-complete.
\label{maaf}
\end{theorem}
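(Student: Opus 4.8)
The plan is to prove membership in NP directly from the cherry-picking characterisation, and then to prove hardness by a reduction that exploits precisely the phenomenon the paper is advertising: for three or more trees, the number of components of an induced forest does not determine the hybridisation number.

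For membership, I would use Theorem~\ref{picking1}, which gives $h_{\rm tc}(\cP)=s(\cP)$, so that $h_{\rm tc}(\cP)\le k$ if and only if $\cP$ admits a cherry-picking sequence of weight at most $k$. Since every set of phylogenetic $X$-trees is displayed by a universal tree-child network (Corollary~\ref{cor:tcdisplay2}), $h_{\rm tc}(\cP)$ is bounded by a polynomial in $|X|$; hence we may assume $k$ is polynomially bounded, and a minimum-weight sequence $\sigma$ has length $s+1\le |X|+k$, polynomial in the instance size. Thus $\sigma$ is a succinct certificate: given $\sigma$ one verifies in polynomial time that it satisfies (P) and that iteratively picking its pairs reduces every tree in $\cP$ to a single vertex, and then reads off $w(\sigma)$. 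The forest $\cF$ plays no role in this check — it is extra data that can only make the problem easier — so {\sc Scoring Optimum Forest} lies in NP.

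For hardness I would reduce from a suitable NP-complete problem $\Pi$, mapping each instance $I$ to a triple $(k,\cP,\cF)$ in which $\cP$ consists of at least three binary phylogenetic $X$-trees encoding $I$ and $\cF$ is a single, explicitly written partition of $X\cup\{\rho\}$ whose combinatorial type is \emph{independent} of the yes/no status of $I$. Two things must be established: (1) that $\cF$ is a legitimate optimum forest for $\cP$, so $(k,\cP,\cF)$ is a well-formed instance; and (2) that $h_{\rm tc}(\cP)\le k$ if and only if $I$ is a yes-instance of $\Pi$. The delicate point is that proving (1) must not secretly compute $h_{\rm tc}(\cP)$: instead I would design the trees so that the tree-part decomposition of \emph{every} tree-child network displaying $\cP$ is forced — after deleting reticulation edges the surviving tree-components must realise exactly the blocks of $\cF$, however the reticulation edges are placed. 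Then some optimal displaying network induces $\cF$, so $\cF$ is optimum, yet the number of blocks of $\cF$ is fixed while $h_{\rm tc}(\cP)$ varies with $I$. This decoupling is exactly what fails for $|\cP|=2$, where the block count equals $h(\cP)+1$, and is what forces the use of $|\cP|\ge 3$. For (2) I would again invoke Theorem~\ref{picking1} to replace $h_{\rm tc}(\cP)$ by $s(\cP)$ and arrange the gadget so that the cheapest reduction of $\cP$ by cherry picking — equivalently, the reticulations of an optimal network built as in {\sc Construct Tree-Child Network} — corresponds bijectively to an optimal solution of $I$, with $s(\cP)$ equal to a fixed offset plus the optimum value of $I$; taking $k$ to be the matching threshold then yields the equivalence.

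The main obstacle is the simultaneous calibration demanded by (1) and (2): the gadget must make the induced-forest structure rigid (so $\cF$ is forced, and hence certifiably optimum for every $I$ without ever computing $h_{\rm tc}$) while leaving the reticulation count, equivalently $s(\cP)$, free to track the optimum of $I$. Concretely, one must separate ``how the trees are cut into components'' from ``how expensively those components are reconnected,'' and engineer the trees so that the former is insensitive to $I$ and the latter encodes $\Pi$. The hardest part of the correctness proof is ruling out shortcuts: showing that no leaf-cheap placement of reticulation edges, and no cherry-picking sequence respecting property (P) and tree-childness, can beat the value dictated by $I$. Establishing this lower bound on $s(\cP)$ — that every valid cherry-picking sequence of weight below the threshold would induce a solution of $\Pi$ of value below its optimum — is where the detailed, problem-specific argument must be carried out.
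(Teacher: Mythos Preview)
Your high-level architecture matches the paper's: NP membership via a short certificate, and hardness by a reduction producing $(k,\cP,\cF)$ with $\cF$ written down explicitly and independently of the yes/no status of the source instance. The paper's concrete source problem is {\sc Shortest Common Supersequence} restricted so that each word has three letters, all orbits have size at most two, and no word repeats a letter (Corollary~\ref{scs2}); each word becomes a rooted caterpillar, $\cP$ is the set of these caterpillars, and $\cF$ is the partition with one big block $\{\rho,\alpha,\beta_1,\ldots,\beta_{q|\Sigma|}\}$ and singletons $\{\ell\}$ for $\ell\in\Sigma$. Lemma~\ref{super} gives the tight correspondence between $s(\cP)$ and supersequence length, exactly the ``offset plus optimum'' calibration you describe. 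Your NP certificate (a cherry-picking sequence) is a harmless variant of the paper's (a tree-child network).

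There is, however, a genuine gap in how you propose to certify that $\cF$ is optimum. You require that \emph{every} tree-child network displaying $\cP$ induce $\cF$. This cannot be arranged: given any tree-child network $\cN$ displaying $\cP$, one can always subdivide a tree edge and hang an extra reticulation off it to obtain another tree-child network displaying $\cP$ with a strictly finer induced forest. So the forest is never rigid across \emph{all} displaying networks, only (at best) across optimal ones. Relatedly, your worry that the correctness proof ``must not secretly compute $h_{\rm tc}(\cP)$'' is misplaced. The reduction \emph{algorithm} must not compute it; the \emph{proof} of correctness may and does characterise it. The paper takes a shortest common supersequence $z$ (which exists, though we do not compute it), builds from $z$ a specific tree-child network $\cN$ with $h(\cN)=|z|$, observes that $\cN$ induces exactly $\cF$, and then invokes Lemma~\ref{super} together with Theorem~\ref{picking1} to conclude $h_{\rm tc}(\cP)=|z|$, so $\cN$ is optimal and $\cF$ is optimum. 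Thus the route to (1) is not ``force the forest in every network'' but ``exhibit one optimal network and show it induces $\cF$''; the hard combinatorics all lives in Lemma~\ref{super}, which is precisely the lower-bound argument you flag as the crux.
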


The remainder of this section consists of the proof of Theorem~\ref{maaf}. To establish the result, we use a reduction from a particular instance of the NP-complete problem {\sc Shortest Common Supersequence}. Let $\Sigma$ be a finite alphabet, and let $W$ be a finite subset of words in $\Sigma^*$. A word $z\in \Sigma^*$ is a {\it common supersequence of $W$} if each word in $W$ is a subsequence of $z$.\\

\noindent {\sc Shortest Common Supersequence (SCS)} \\
\noindent {\bf Instance.} A non-negative integer $k$, a finite alphabet $\Sigma$, and a finite subset $W$ of words in $\Sigma^*$. \\
\noindent {\bf Question.} Is there a supersequence of the words in $W$ with at most $k$ letters?\\

Timkovskii~\cite[Theorem 2]{tim89} established the next theorem. The {\it orbit} of a letter in $\Sigma$ is the set of
its occurrences in the words in $W$. Note that if a word in $W$ uses a letter, $b$ say, twice, then that word contributes two occurrences to the orbit of $b$.

\begin{theorem}
The decision problem {\sc SCS} is {\rm NP}-complete even if each word in $W$ has $3$ letters and the size of all orbits is $2$.
\label{scs1}
\end{theorem}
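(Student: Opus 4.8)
The plan is to establish both membership in NP and NP-hardness, the latter by a reduction that produces only the rigid instances allowed (length-$3$ words, orbit size $2$).

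\emph{Membership.} A supersequence $z$ over $\Sigma$ with $|z|\le k$ is the certificate. Since the concatenation of the words in $W$ is always a common supersequence, we may assume $k\le 3|W|$, so the certificate has polynomial size. Checking that each word $w\in W$ is a subsequence of $z$ is the standard greedy left-to-right scan, running in time linear in $|z|$. Restricting the class of instances cannot destroy membership, so the restricted problem lies in NP; the work is in hardness.

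\emph{Reformulating the optimum.} The first real step is to rewrite the length of an optimal supersequence combinatorially, exploiting the orbit-$2$ condition. Because every letter occurs in exactly two words, an optimal $z$ uses each letter either once or twice (a third copy is never forced by only two demands and only lengthens $z$), so $|z|=|\Sigma|+t=3|W|-|T|$, where $T\subseteq\Sigma$ is the set of letters appearing just once and we have used $2|\Sigma|=3|W|$. I would make this precise by viewing each length-$3$ word as a directed path on three nodes recording its reading order, giving $|W|$ vertex-disjoint paths; each letter $a$ then pairs one node of one path with one node of another. For a chosen \emph{shared} set $T$, contract the paired nodes of every letter in $T$ to obtain a digraph $D_T$. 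A supersequence repeating exactly the letters of $\Sigma\setminus T$ exists if and only if $D_T$ is acyclic: any topological order of $D_T$ realises all word-orders simultaneously, and conversely a cycle obstructs any linear extension. Hence the question ``is there a supersequence with $\le k$ letters?'' is equivalent to ``is there $T\subseteq\Sigma$ with $D_T$ acyclic and $|T|\ge 3|W|-k$?'', i.e.\ a \emph{maximum acyclic contraction} problem on a union of length-$3$ paths glued by a perfect pairing.

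\emph{The reduction.} This reformulation matches the shape of a Feedback-Arc-Set / Maximum-Acyclic-Subgraph type problem, and I would reduce from a suitably restricted NP-hard instance of that flavour. The heart of the construction is to translate the source instance into a family of length-$3$ words over an alphabet in which every letter occurs exactly twice, designed so that the maximum feasible $|T|$ equals a known function of the source optimum. I would build \emph{path gadgets} (sequences of length-$3$ words whose internal pairing letters transmit a chosen ordering) and \emph{linking letters} (each glueing two gadget nodes at the places where the source problem allows a contraction), arranging the occurrences so that each letter is used exactly twice and each word has exactly three letters.

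\emph{Main obstacle.} The delicate point is precisely the rigidity of the target class: classical SCS reductions exploit long words or large orbits for padding, whereas here every local piece must be a length-$3$ word with each letter appearing exactly twice, leaving no slack for bookkeeping symbols. So the crux is (a) engineering gadgets that still have enough expressive power under these constraints, and (b) proving the two-way correspondence, namely that a large acyclic contraction $T$ forces the intended solution of the source problem (soundness: no spurious non-cyclic gluings arise that ``cheat'' the encoding) and that every source solution yields such a $T$ with $D_T$ acyclic (completeness). Verifying that the constructed $D_T$ has \emph{no} unintended directed cycles, and that the only near-optimal supersequences are the intended ones, is where the argument's weight lies; the rest is the routine polynomiality of the construction and the bound $k\le 3|W|$.
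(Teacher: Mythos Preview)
The paper does not prove this theorem at all: it is quoted as \cite[Theorem~2]{tim89} (Timkovskii, 1989) and used as a black box to derive Corollary~\ref{scs2}. So there is no ``paper's own proof'' to compare against; whatever you write here is a reconstruction of an external result.

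As for your sketch, the NP-membership and the reformulation via the digraph $D_T$ are sound and pleasant: with all orbits of size~$2$ an optimal supersequence indeed uses each letter at most twice, and the equivalence between ``$|T|$ letters can be shared'' and ``$D_T$ is acyclic'' is correct. But the actual hardness argument is not present. You announce a reduction ``from a suitably restricted NP-hard instance of that flavour'' and describe in general terms that you would build path gadgets and linking letters, yet you never name the source problem, never give the gadgets, and never verify the two-way correspondence. Your own final paragraph identifies this as ``where the argument's weight lies'' --- and that weight is entirely missing. The rigidity constraints (length~$3$, orbit exactly~$2$) are precisely what make the gadgetry nontrivial, and nothing here shows that a reduction with those constraints can be made to work. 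As written this is a plausible plan, not a proof; if you want a self-contained argument you would need to carry out the full construction in the spirit of Timkovskii's original, or simply cite that paper as the present authors do.
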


A consequence of Theorem~\ref{scs1} is the next corollary.

\begin{corollary}
The decision problem {\sc SCS} is {\rm NP}-complete even if each word in $W$ has $3$ letters, the size of all orbits is at most $2$, and no word  in $W$ contains a letter twice.
\label{scs2}
\end{corollary}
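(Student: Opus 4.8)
The plan is to establish Corollary~\ref{scs2} by a simple polynomial-time reduction from the restricted version of {\sc SCS} already shown to be NP-complete in Theorem~\ref{scs1}. Membership in NP is immediate, since the instances described in the corollary form a subclass of {\sc SCS} (guess a word of length at most $k$ and verify in polynomial time that every word of $W$ embeds into it as a subsequence). Hence the entire content is to preserve NP-hardness while additionally forbidding repeated letters within a word.

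So I would start from an instance $(k,\Sigma,W)$ in which every word has exactly three letters and every orbit has size exactly two. The first observation is structural: because each word has length three and each orbit has size two, a letter that occurs twice in a single word $w$ must occur exactly twice in $w$ and nowhere else in $W$. In particular, at most one letter is repeated in any word, and such a word has, up to the position of the repeat, the form $bbc$, $bcb$, or $cbb$ for distinct letters $b$ and $c$.

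The reduction then eliminates repeats one word at a time. For each word $w$ containing a repeated letter $b$, I would introduce a fresh letter $b_w\notin\Sigma$ and replace the first of the two occurrences of $b$ in $w$ by $b_w$, leaving $k$ and all other words unchanged; call the resulting instance $(k,\Sigma',W')$. After processing every such word, each word of $W'$ has three distinct letters, so no word contains a letter twice. Moreover, each fresh letter has orbit size one, each formerly doubled letter $b$ now has orbit size one, and every other orbit is unchanged of size two, so all orbits have size at most two. Thus $(k,\Sigma',W')$ lies in the class described in the corollary, and the construction is plainly polynomial.

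The crux is to verify that the reduction preserves the optimal supersequence length, i.e. $\mathrm{SCS}(W)=\mathrm{SCS}(W')$; it suffices to treat a single rename and iterate, as the fresh letters are distinct and each interacts with only one word. The one place the orbit-size hypothesis is used is precisely that the doubled letter $b$ occurs in no word other than $w$. For $\mathrm{SCS}(W')\le\mathrm{SCS}(W)$, I would take a shortest supersequence $z$ of $W$, fix an embedding of $w$, let $p_1<p_2$ be the positions of $z$ receiving the two $b$'s of $w$, and relabel position $p_1$ from $b$ to $b_w$; since no other word of $W$ contains $b$, no other embedding uses position $p_1$, so the relabelled string is a supersequence of $W'$ of the same length. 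For the reverse inequality, I would take a shortest supersequence $z'$ of $W'$ and relabel every $b_w$ back to $b$; as the modified word forces $b_w$ before $b$, the two resulting $b$'s occur in order and recover an embedding of $w$, while the fresh letter appears in no other word, so the relabelled string is a supersequence of $W$ of the same length. Combining the bounds gives equality, hence the two instances have the same answer. The main obstacle is nothing deep but rather this bookkeeping: confirming that relabelling a single occurrence neither shortens nor lengthens a minimum supersequence, which rests entirely on each repeated letter being confined to one word.
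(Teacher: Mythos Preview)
Your proof is correct, and it takes a genuinely different route from the paper's. The paper does not rename letters; instead it \emph{deletes} the offending words. It sets $Y\subseteq W$ to be the words without a repeated letter, lets $t$ be the number of distinct letters occurring in two distinct words of $W-Y$, and proves that $W$ has a supersequence of length at most $2|W-Y|+t+k$ if and only if $Y$ has one of length at most $k$. The forward direction is a somewhat delicate case analysis: for each discarded word one must prepend or append the doubled letter in the right places relative to the third letter, depending on whether that third letter lives in $Y$ or in another word of $W-Y$.

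Your letter-splitting reduction is cleaner. By replacing only the first of the two occurrences of a doubled letter $b$ by a fresh symbol $b_w$, you keep $|W|$ fixed and preserve the SCS value exactly, so no additive bookkeeping constant is needed. The key fact you use---that a doubled letter is confined to a single word because its orbit already has size two---is exactly the same observation the paper leans on, but you exploit it more directly. One small wording point: in the direction $\mathrm{SCS}(W)\le\mathrm{SCS}(W')$, when you say ``the modified word forces $b_w$ before $b$'', it would be slightly clearer to say that in $w'$ the symbol $b_w$ precedes the remaining $b$ by construction (you replaced the \emph{first} occurrence), so any embedding of $w'$ into $z'$ places $b_w$ earlier; the argument is correct as stated, just terse. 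The iteration over all bad words is unproblematic since each fresh letter is new and each doubled letter is private to its word, as you note.
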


\begin{proof}
Let $k$, $\Sigma$, and $W$ be an instance of {\sc SCS}, where each word $W$ has $3$~letters and all orbits have size $2$. Let $Y$ be the subset of $W$ that consists of those words in $W$ in which no letter occurs twice. Observe that if $b\in \Sigma$ and $b$ occurs twice in a word in $W$, then no word in $Y$ contains $b$. Furthermore, with regards to $Y$, each word has $3$ letters, the size of all orbits is at most $2$, and no word contains a letter twice. Let $t$ denote the number of distinct letters that occur in two distinct words in $W-Y$.
Note that the construction of $Y$ and the computation of $t$ can both be done in time polynomial in $|W|$. The corollary will follow from Theorem~\ref{scs1} by showing that {\sc SCS} with parameters $\Sigma$ and $W$ has a supersequence of length at most
$$2|W-Y|+t+k$$
if and only if {\sc SCS} with parameters $\Sigma$ and $Y$ has a supersequence of length at most $k$.

Suppose that {\sc SCS} with parameters $\Sigma$ and $Y$ has a supersequence $z$ of length at most $k$. Now iteratively extend $z$ to a sequence $z'$ as follows. Let $w\in W-Y$. Then $w$ contains two occurrences of a letter, $b$ say, in $\Sigma$. Let $d$ denote the third letter in $w$. Note that $b$ occurs in no other word in $\Sigma$ and $d$ occurs in exactly one other word in $\Sigma$. First assume that $d$ occurs in a word in $Y$. Depending on whether $d$ is the first, second, or third letter in $w$, extend $z$ by adding $bb$ to the end of $z$, adding $b$ at the beginning and $b$ at the end of $z$, or adding $bb$ at the beginning of $z$, respectively. The resulting sequence is a supersequence for $Y\cup \{w\}$. Second assume that $d$ does not occur as a word in $Y$. Then $d$ occurs in a word $w'$ in $W-Y$. Let $c$ denote the letter occurring twice in $w'$. Extend $z$ by adding $w$ to the beginning of $z$ and then, to the resulting sequence, add two occurrences of $c$ after $w$, add one occurrence of $c$ before $w$ and one occurrence of $c$ after $w$, or two occurrences of $c$ before $w$ depending on whether $d$ is the first, second, or third letter of $w'$, respectively. The resulting sequence is a supersequence for $Y\cup \{w, w'\}$.
Taking the resulting sequence and repeating this process for each remaining word in $W-(Y\cup \{w\})$ or $W-(Y\cup \{w, w'\})$, respectively, we eventually obtain a supersequence $z'$ for $W$. Moreover, $z'$ has length
$$2|W-Y|+t+k.$$

For the converse, suppose that there is a supersequence $z$ of $W$ of length
$$2|W-Y|+t+k.$$
Let $z'$ be the sequence obtained from $z$ by deleting each occurrence of a letter that occurs twice in a word in $W$ and deleting exactly one occurrence of a letter that occurs in two distinct words in $W-Y$ and, hence, does not occur in a word in $Y$. It is easily checked that $z'$ is a supersequence of $Y$. Furthermore, since there are $2|W-Y|$ deletions of the first type and $t$ deletions of the second type, it follows that $z'$ has length $k$. This completes the proof of the corollary.\qed
\end{proof}

The decision problem described in the statement of Corollary~\ref{scs2} is the one we will use for the reduction in proving Theorem~\ref{maaf}.  Let $k$, $\Sigma$, and $W$ be an instance of {\sc SCS} such that each word in $W$ has $3$ letters, the size of all orbits is at most $2$, and no word in $W$ contains a letter twice. Without loss of generality, we may assume that, for each $\ell\in \Sigma$, there is a word in $W$ containing $\ell$, and that $|W|\ge 3$, so no letter is contained in each word. Let
$$W=\{w_1, w_2, \ldots, w_q\}$$
and, for each $i\in \{1, 2, \ldots, q\}$, let $w_i=w_{i1}\, w_{i2}\, w_{i3}$. Also, let
$$o(\Sigma)=\ell_1, \ell_2, \ldots, \ell_{|\Sigma|}$$
denote a fixed ordering of the letters in $\Sigma$. For each $w_i$ in $W$, we denote the sequence obtained from $o(\Sigma)$ by removing each of the three letters in $w_i$ by $o(\Sigma)-w_i$.

We now construct an instance of {\sc Scoring Optimum Forest}. A {\it rooted caterpillar} is a binary phylogenetic tree $\cT$ whose leaf set can be ordered, say $x_1, x_2, \ldots, x_n$, so that $\{x_1, x_2\}$ is a cherry and if $p_i$ denotes the parent of $x_i$, then, for all $i\in \{3, 4, \ldots, n\}$, we have $(p_i, p_{i-1})$ as an edge in $\cT$. Here, we denote the rooted caterpillar by $(x_1, x_2, \ldots, x_n)$.

Now, for each $i\in \{1, 2, \ldots, q\}$, let $\cT_i$ denote the rooted caterpillar
$$\cT_i=(\alpha, w_{i1}, w_{i2}, w_{i3}, \beta_1, \beta_2, \ldots, \beta_{q|\Sigma|}, o(\Sigma)-w_i),$$
and let $\cP=\{\cT_1, \cT_2, \ldots, \cT_q\}$. Note that each of the trees in $\cP$ has leaf set
$$X=\{\alpha, \beta_1, \beta_2, \ldots, \beta_{q|\Sigma|}\}\cup \Sigma$$
and $\cP$ can be constructed in time polynomial in the size of $\Sigma$ and $W$.

We next establish a lemma that reveals a relationship between the weight of a cherry-picking sequence for $\cP$ and the length of a supersequence for $W$. Let $$\sigma=(x_1, y_1), (x_2, y_2), \ldots, (x_s, y_s), (x_{s+1}, -)$$ be a cherry-picking sequence for a set $\cP$ of binary phylogenetic $X$-trees. For each $i\in\{1,2,\ldots,s\}$, we say that $(x_i,y_i)$ {\it corresponds} to $n$ trees in $\cP$ if $\{x_i,y_i\}$ is a cherry in exactly $n$  trees obtained from $\cP$ by picking $x_1,x_2,\ldots,x_{i-1}$, where $1\leq n\leq |\cP|$.

\begin{lemma}
Let $k\le 2|\Sigma|$ be a positive integer. Then there is a cherry-picking sequence of $\cP$ of weight $k$ if and only if there is a supersequence of $W$ of length $k$.
\label{super}
\end{lemma}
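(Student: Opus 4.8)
The plan is to read any cherry-picking sequence for $\cP$ through the order in which letters are removed from the \emph{bottoms} of the caterpillars, and to match that order with an embedding of each word of $W$ as a subsequence. For the direction from supersequences to sequences, suppose $z=z_1z_2\cdots z_k$ is a supersequence of $W$. I would process $\cP$ in three consecutive stages, always keeping the leaf $\alpha$ at the bottom of every reduced caterpillar. First, for $t=1,2,\ldots,k$ in turn I pick $(z_t,\alpha)$; since $z$ restricts to each word $w_i$ as a subsequence, these $k$ pairs delete $w_{i1},w_{i2},w_{i3}$ from $\cT_i$ in order and clear every word-letter from the bottom of every tree. Second, I pick $(\beta_1,\alpha),(\beta_2,\alpha),\ldots$, deleting the $q|\Sigma|$ common spine leaves. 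Third, I pick the remaining (top) letters in the common order $o(\Sigma)=\ell_1,\ldots,\ell_{|\Sigma|}$, each with partner $\alpha$, which is legitimate because every caterpillar carries its top letters in exactly this order; I then append $(\alpha,-)$. Every pair but the last has $\alpha$ as its second coordinate and $\alpha$ occurs as a first coordinate only at the end, so (P) holds and this is a cherry-picking sequence for $\cP$. Its length is $k+q|\Sigma|+|\Sigma|+1$, and as $|X|=1+q|\Sigma|+|\Sigma|$ its weight is exactly $k$.

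For the converse, let $\sigma$ be a cherry-picking sequence for $\cP$ with $w(\sigma)=k\le 2|\Sigma|$, and let $z$ be the subsequence of first coordinates of $\sigma$ consisting of those letters that are deleted from a bottom (word) position in at least one tree. Because every leaf lying below a given word-letter in $\cT_i$ must be deleted before that letter can sit in the bottom cherry of $\cT_i$, the three letters of each $w_i$ are deleted from $\cT_i$ in the order $w_{i1},w_{i2},w_{i3}$; hence each $w_i$ is a subsequence of $z$, so $z$ is a supersequence of $W$.

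It remains to bound $|z|$ by $k$, and the crucial step is to show that no single pick can delete a letter from a bottom position in one caterpillar and from a top position in another. Suppose a pick $P$ deleted $\ell$ from the bottom of $\cT_i$ and from the top of $\cT_{i'}$. Since $\ell$ still occupies a word position in $\cT_i$, all $q|\Sigma|$ spine leaves survive in $\cT_i$ at $P$; since $\ell$ has reached the top of $\cT_{i'}$, every spine leaf has already been deleted from $\cT_{i'}$ before $P$. Thus each $\beta_m$ is deleted from $\cT_{i'}$ before $P$ and from $\cT_i$ after $P$, so each of the $q|\Sigma|$ spine leaves occurs as a first coordinate of $\sigma$ at least twice, forcing $w(\sigma)\ge q|\Sigma|\ge 3|\Sigma|>2|\Sigma|\ge k$ (using $q=|W|\ge 3$), a contradiction. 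Hence every pick deleting $\ell$ acts either only on bottom positions or only on top positions. As each letter lies at the top of at least one caterpillar (no letter belongs to all $q\ge 3$ words, the orbits having size at most $2$), each letter needs at least one dedicated top-deletion pick; writing $m_\ell$ for the number of first-coordinate occurrences of $\ell$ and $b_\ell$ for the number of its bottom-deletion picks, this gives $m_\ell\ge b_\ell+1$. Summing, $|z|=\sum_\ell b_\ell\le\sum_\ell(m_\ell-1)\le w(\sigma)=k$, and inserting arbitrary letters turns $z$ into a supersequence of $W$ of length exactly $k$.

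I expect the main obstacle to be precisely the exclusion of such ``mixed'' picks in the converse: forbidding a pick that simultaneously clears a letter from a bottom position and from a top position is the point at which the large common spine of $q|\Sigma|$ auxiliary $\beta$-leaves and the hypothesis $k\le 2|\Sigma|$ are indispensable, since any mixed pick desynchronises the spine across two trees and blows the weight past $2|\Sigma|$.
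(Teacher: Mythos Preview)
Your forward direction is correct and coincides with the paper's argument.

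The converse has a genuine gap. You assert that ``every leaf lying below a given word-letter in $\cT_i$ must be deleted before that letter can sit in the bottom cherry of $\cT_i$,'' and hence that the word letters of $w_i$ are deleted from $\cT_i$ in the order $w_{i1}, w_{i2}, w_{i3}$. This is false: in a rooted caterpillar a leaf enters the cherry as soon as \emph{all but one} of the leaves below it have been removed. Concretely, $\{\alpha, w_{i1}\}$ is the initial cherry of $\cT_i$, and the pick $(\alpha, w_{i1})$ removes $\alpha$ (not $w_{i1}$) from $\cT_i$; the cherry then becomes $\{w_{i1}, w_{i2}\}$, and one may legally pick $(w_{i2}, w_{i1})$ followed by $(w_{i3}, w_{i1})$, deleting the word letters in the order $w_{i2}, w_{i3}, w_{i1}$. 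In that scenario $w_i$ is \emph{not} a subsequence of your $z$. Nothing in your argument rules this out: property~(P) only forbids $\alpha$ from being a \emph{second} coordinate after it has appeared as a first coordinate, and you do not invoke the hypothesis $k\le 2|\Sigma|$ at this step.

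This is precisely the obstacle the paper's proof confronts. Rather than reading a supersequence directly off $\sigma$, the paper first normalises the initial segment of $\sigma$: each pair $(\alpha,\ell)$ is replaced by $(\ell,\alpha)$, and each pair $(\ell,\ell')$ with $\ell,\ell'\in\Sigma$ is replaced by one of $(\ell,\alpha)$, $(\ell',\alpha)$ according to a case analysis on how many trees the pair serves and which of $\ell,\ell'$ was already picked. Only after this modification are all relevant picks of the form $(\cdot,\alpha)$, which is exactly what forces the word letters to leave each $\cT_i$ in order. Your top/bottom dichotomy and the $\beta$-spine counting are a nice idea for the length bound, but they cannot substitute for this normalisation; incidentally, the auxiliary claim that ``all $q|\Sigma|$ spine leaves survive in $\cT_i$ at $P$'' rests on the same faulty premise --- if $\ell$ is the sole surviving leaf of $\cT_i$ at position $\le 4$, then picks of the form $(\beta_m,\ell)$ may already have removed some $\beta$'s from $\cT_i$ before $P$.
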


\begin{proof}
First suppose there is a common supersequence $z$ of $W$ of length $k$. Let
$$z=m_1\, m_2\, \cdots\, m_k,$$
and let $\sigma$ denote the sequence
$$(m_1, \alpha), \ldots, (m_k, \alpha), (\beta_1, \alpha), \ldots, (\beta_{q|\Sigma|}, \alpha), (\ell_1, \alpha), \ldots, (\ell_{|\Sigma|}, \alpha), (\alpha, -).$$
Since $z$ is a supersequence of $W$, it is easily seen that $\sigma$ is a cherry-picking sequence of $\cP$. Moreover,
$$w(\sigma)=(k+|X|)-|X|=k.$$

Now suppose that there is a cherry-picking sequence $$\sigma=(x_1, y_1), (x_2, y_2), \ldots, (x_s, y_s), (x_{s+1}, -)$$ of $\cP$ of weight $k$. Without loss of generality, we may assume that each ordered pair in $\sigma$ is essential. We first show that there is a positive integer $i'$ such that $\cP_{i'}$ is obtained from $\cP$ by picking $x_1,x_2,\ldots, x_{i'}$ and each tree in $\cP_{i'}$
has a cherry consisting of two elements in $\{\alpha, \beta_1, \beta_2, \ldots, \beta_{q|\Sigma|}\}$. 
If not, then there is a word $w_i$ in $W$ such that either $(w_{ij}, \beta_{q|\Sigma|})$ or $(\beta_{q|\Sigma|}, w_{ij})$ is an ordered pair in $\sigma$, where $j\in \{1, 2, 3\}$. Now, by considering a word not containing $w_{ij}$ and its associated tree in $\cP$, it is easily seen that $w(\sigma)\ge q|\Sigma|-1$ as each of the elements in $\{\beta_1, \beta_2, \ldots, \beta_{q|\Sigma|-1}\}$ appear at least twice as the first element of an ordered pair in $\sigma$. But then
$$q|\Sigma|-1 > 2|\Sigma|$$
as $q\ge 3$ and $|\Sigma|\ge 3$, contradicting the assumption $k\le 2|\Sigma|$.

Consider the first $i'$ ordered pairs in $\sigma$. For each tree $\cT_i$ in $\cP$, there are exactly three ordered pairs whose first and second elements are in $$S=\{\alpha, w_{i1}, w_{i2}, w_{i3}\}$$ and picking $x_1,x_2,\ldots,x_{i'}$ from $\cT_i$ picks three elements of $S$.
Since the size of all orbits is at most $2$, such an ordered pair corresponds to at most two trees in $\cP$. We next construct a sequence $\sigma'$ of ordered pairs obtained from $\sigma$. We start by modifying the first $i'$ ordered pairs of $\sigma$ as follows:\\
\begin{enumerate}[(a)]
\item Amongst the first $i'$ ordered pairs, replace each ordered pair of the form $(\alpha, \ell)$ with $(\ell, \alpha)$, where $\ell\in \Sigma$.

\item With the sequence obtained after (a) is completed, sequentially move along the sequence to the $i'$-th ordered pair replacing each ordered pair of the form $(\ell, \ell')$, where $\ell, \ell'\in \Sigma$ in one of the following ways:
\begin{enumerate}[(i)]
\item If $(\ell, \ell')$ corresponds to exactly one tree in $\cP$, then replace it with $(\ell, \alpha)$ or $(\ell', \alpha)$ depending on whether $(\ell', \alpha)$ or $(\ell, \alpha)$, respectively, is an earlier ordered pair.

\item If $(\ell, \ell')$ corresponds to two trees, $\cT_i$ and $\cT_j$ say, in $\cP$ and the order of the letters $\ell$ and $\ell'$ is the same in $w_i$ and $w_j$, then replace it with $(\ell, \alpha)$ or $(\ell', \alpha)$ depending on whether $(\ell', \alpha)$ or $(\ell, \alpha)$, respectively, is an earlier ordered pair.

\item If $(\ell, \ell')$ corresponds to two trees, $\cT_i$ and $\cT_j$ say, in $\cP$ and the order of the letters $\ell$ and $\ell'$ in $w_i$ is not the same as that in $w_j$, then replace it with $(\ell, \alpha)$ if $(\ell, \alpha)$ occurs as an ordered pair before $(\ell', \alpha)$ earlier in the sequence; otherwise, $(\ell', \alpha)$ occurs as an ordered pair before $(\ell, \alpha)$ earlier in the sequence and so replace it with $(\ell', \alpha)$.\\
\end{enumerate}

\end{enumerate}
With this modification of $\sigma$ after (b) is completed, let $\sigma'_1$ denote the subsequence of the first $i'$ ordered pairs whose coordinates are in $\Sigma\cup \{\alpha\}$, and let $\sigma'_2$ denote the subsequence $\sigma-\sigma'_1$. Let $\sigma'$ denote the concatenation of $\sigma'_1$ and $\sigma'_2$.

Now, consider each tree $\cT_i$ in $\cP$ together with its corresponding ordered pairs in $\sigma$ and the associated ones in $\sigma'$. Let $\cP_{|\sigma'_1|}$ be the set of trees obtained from $\cP$ by picking $x_1',x_2',\ldots,x_{|\sigma'_1|}'$, where $x_j'$ is the first coordinate of the $j$-th ordered pair in $\sigma'$ for each $j\in\{1,2,\ldots,|\sigma'_1|\}$. A routine check shows that this picking sets the tree corresponding to $\cT_i$ in $\cP_{|\sigma'_1|}$ to be the rooted caterpillar
$$(\alpha, \beta_1, \beta_2, \ldots, \beta_{q|\Sigma|}, o(\Sigma)-w_i).$$
In particular, $(w_{i1}, \alpha), (w_{i2}, \alpha), (w_{i3}, \alpha)$ is subsequence of $\sigma'_1$.


We next extend $\sigma'_1$ to a cherry-picking sequence for $\cP$ of weight at most $k$. Consider $\sigma$ and $\sigma'$. If $\sigma_1$ denotes the subsequence of ordered pairs in $\sigma$ corresponding to $\sigma'_1$, then $|\sigma_1|=|\sigma'_1|$ and
$$|\sigma|-|\sigma_1|\ge q|\Sigma|+|\Sigma|+1$$
as each of the elements in $\{\beta_1, \beta_2, \ldots, \beta_{q|\Sigma|}\}\cup \Sigma$ as well as at least one element in $\{\beta_1, \beta_2, \ldots, \beta_{q|\Sigma|}\}\cup \Sigma\cup \{\alpha\}$ appears as the first coordinate of an ordered pair in $\sigma-\sigma_1$. Here, an element in $\{\beta_1, \beta_2, \ldots, \beta_{q|\Sigma|}\}\cup \Sigma$ may be counted twice as it appears as the first coordinate of two ordered pairs in $\sigma-\sigma_1$. It follows that the sequence of ordered pairs that is the concatenation of $\sigma'_1$ and
$$(\beta_1, \alpha), (\beta_2, \alpha), \ldots, (\beta_{q|\Sigma|}, \alpha), (\ell_1, \alpha), (\ell_2, \alpha), \ldots, (\ell_{|\Sigma|}, \alpha), (\alpha, -)$$
is a cherry-picking sequence of $\cP$ whose weight is at most $k$.

Let $\sigma'_1$ be the sequence
$$(m_1, \alpha), (m_2, \alpha), \ldots, (m_{k'}, \alpha).$$
Since $(w_{i1}, \alpha), (w_{i2}, \alpha), (w_{i3}, \alpha)$ is a subsequence of $\sigma'_1$ for each tree $\cT_i$,
$$m_1\, m_2\, \cdots\, m_{k'}$$
is a common supersequence of $W$. Moreover, as $w(\sigma)=k$, we have $k'\le k$. It follows that there is a supersequence of $W$ of length $k$.\qed
\end{proof}

To complete the proof of Theorem~\ref{maaf}, let
$$\cF=\big\{\{\rho, \alpha, \beta_1, \beta_2, \ldots, \beta_{q|\Sigma|}\}\big\}\cup \big\{\{\ell\}: \ell\in \Sigma\big\}$$
be a partition of $X\cup \{\rho\}$. We next show that $\cF$ is an optimum forest induced by a tree-child network on $X$ with root $\rho$ and that displays $\cP$. 
Let $z=z_1z_2\cdots z_k$ be a common supersequence of $W$ of minimum length, and suppose this length is $k$. Since all orbits have size at most $2$ and $z$ is of minimum length, each letter in $\Sigma$ appears at most twice in $z$, and so $k\le 2|\Sigma|$. Let $\cT$ be the `multi-labelled' rooted caterpillar
$$\cT=(\alpha,\, z_1,z_2,\ldots,z_k,\, \beta_1,\, \beta_2,\, \ldots,\, \beta_{q|\Sigma|},\, o(\Sigma))$$
and let $\cN$ be the tree-child network with root $\rho$ obtained from $\cT$ as follows. For each $\ell\in \Sigma$, identify the leaves labelled $\ell$ and adjoin a new pendant edge to the identified vertex with the leaf-end labelled $\ell$. Since $z$ is a common supersequence of $W$, it is easily checked that $\cN$ displays $\cP$. Furthermore, $h(\cN)=k$. By Theorem~\ref{picking1} and Lemma~\ref{super}, $h_{\rm tc}(\cP)=k$, and so, as $\cF$ is induced by $\cN$, it follows that $\cF$ is an optimum forest for $\cP$. 

Now, given an arbitrary phylogenetic network, it can be verified in polynomial time whether it is tree-child, it displays $\cP$~\cite{simpson}, its hybridisation number is at most $k$, and it induces $\cF$. Hence, {\sc Scoring Optimum Forest} is in NP.

Theorem~\ref{maaf} now follows by combining Corollary~\ref{scs2} with Theorem~\ref{picking1} and Lemma~\ref{super}.

\section{Concluding remarks}\label{sec:conclu}

In this paper, we have generalised the concept of cherry-picking sequences as introduced in~\cite{humphries13} and shown how this generalisation can be used to characterise the minimum number of reticulation events that is needed to explain any set of phylogenetic $X$-trees in the space of tree-child networks as well as in the space of all phylogenetic networks. To see that these two minima can be different for a fixed set of phylogenetic trees, consider the set $\cP$ of trees presented in Figure~\ref{fig:leaf-added}. It was shown in~\cite{iersel16,kelk12COM} that $h(\cP)=3$  and that there are six phylogenetic networks each of which displays $\cP$ and has a hybridisation number of three. However, none of these six phylogenetic networks is tree-child. Moreover, using cherry-picking sequences a  straightforward check shows that  $h_{\rm tc}(\cP)=4$. Furthermore, we have shown that {\sc Scoring Optimum Forest} is NP-complete. Hence, given an optimum forest, it is computationally hard to compute $h_{\rm tc}(\cP)$ for when $\cP$ is a set of binary phylogenetic $X$-trees, where $|\cP|\ge 3$. This contrasts with the two-tree case for which  {\sc Scoring Optimum Forest} is polynomial-time solvable and  further hints  at that agreement forests are of limited use beyond the two-tree case.

Of course, restricting to collections of binary phylogenetic trees, one could generalise the definition of an acyclic-agreement forest for two binary phylogenetic trees to more than two trees in the most obvious way. That is, one requires Conditions (i), (ii), and (iii) in the definition of an acyclic-agreement forest to hold for each tree in an arbitrarily large collection of binary phylogenetic $X$-trees. With this generalisation in mind and observing that the number of components in a forest that is induced by a tree-child network is equal to its number of reticulations plus one, one might conjecture that, given a set $\cP$ of binary phylogenetic $X$-trees, the number of components in a maximum acyclic-agreement forest for $\cP$ is the same as the minimum number of components in an optimum forest for $\cP$. To see that this is not true, we refer back to Figure~\ref{fig:cps}. Let $\cF$ be the forest induced by $\cN$, and let $\cF'$ be the forest induced by $\cN'$. Since $|\cF'|=3$, a maximum acyclic-agreement forest for $\cP$ has at most three elements. Moreover, since $h(\cN)$=3 and $\cN$ is tree-child, we have $h_{\rm tc}(\cP)\leq 3$. Indeed, it can be checked that $h_{\rm tc}(\cP)= 3$. Moreover, there is no tree-child network that displays $\cP$ and induces an optimum forest that is also a maximum acyclic-agreement forest for $\cP$. Consequently, an approach that exploits maximum acyclic-agreement forests for a set $\cP$ of binary phylogenetic trees to compute $h_{\rm tc}(\cP)$, such as computing a maximum acyclic-agreement forest $\cF$ for $\cP$ and, subsequently, scoring $\cF$ in a way that reflects the number of edges that are directed into each reticulation vertex in a network that induces $\cF$, is unlikely to give the desired result. 

Lastly, from a computational viewpoint, the introduction of acyclic-agree\-ment forests~\cite{baroni05} has triggered significant progress towards the development of ever faster algorithms to solve {\sc Minimum Hybridisation} for when the input contains exactly two phylogenetic trees (e.g. see~\cite{albrecht12,bordewich07b,chen13,collins11,piovesan12,wu10}). We look forward to seeing a similar development now for solving {\sc Minimum Hybridisation} for  arbitrarily many phylogenetic trees by using cherry-picking sequences. In turn, this is likely to be of benefit to biologists who often wish to infer evolutionary histories that are not entirely tree-like and for data sets that usually consists of more than two phylogenetic trees.\\

%
%
%
%
%
%
%
%
%


\noindent {\bf Acknowledgements.} We thank the New Zealand Marsden Fund for their financial support. \\

\noindent {\bf References.}

\end{document}